\numberwithin{equation}{section}
\newcommand{\EE}{\mathbb{E}}
\newcommand{\PP}{\mathbb{P}}
\newcommand{\Cov}{\mathbb{C}\mbox{ov}}
\newcommand{\Var}{\mathbb{V}\mbox{ar}}
\newcommand{\Z}{\mathbb{Z}}
\newcommand{\R}{\mathbb{R}}
\newcommand{\N}{\mathbb{N}}
\newcommand{\iid}{\text{i.i.d.}}
\theoremstyle{definition}
\newtheorem{assumption}{Assumption}
\newtheorem{assumption*}{Assumption}
\newtheorem{algorithm}{Algorithm}
\newtheorem{example}{Example}
\theoremstyle{plain}
\newtheorem{theorem}{Theorem}
\newtheorem{lemma}{Lemma}
\newtheorem{corollary}{Corollary}
\newtheorem{proposition}{Proposition}
\theoremstyle{remark}
\newtheorem{remark}{Remark}
\newenvironment{Example}[1][Example]{\begin{trivlist}
\item[\hskip \labelsep {\bfseries #1}]}{\end{trivlist}}
\newcommand\blfootnote[1]{%
  \begingroup
  \renewcommand\thefootnote{}\footnote{#1}%
  \addtocounter{footnote}{-1}%
  \endgroup
}
\title{A Bootstrap Test for the Existence of Moments for GARCH Processes}
\author{Alexander Heinemann}
\date{\today}
\begin{document}

\begin{titlepage}
\begin{center}

\scshape
\LARGE{\textbf{A Bootstrap Test for the Existence of Moments for GARCH Processes}}
\normalfont
\end{center}
\begin{center}
\par\vspace{1cm}
\text{Alexander Heinemann$^\dagger$}
% \hspace{3cm}
% \text{Sean Telg$^\dagger$}
\par\vspace{2cm} %0.5
%\text{Department of Quantitative Economics}
\par
%\text{Maastricht University}
\par
\text{\today}
\vspace{1cm}
\end{center}

\begin{abstract}
This paper studies the joint inference on conditional volatility parameters and the innovation moments by means of bootstrap to test for the existence of moments for GARCH($p$,$q$) processes. We propose a residual bootstrap to mimic the joint distribution of the quasi-maximum likelihood estimators and the empirical moments of the residuals and also prove its validity. A bootstrap-based test for the existence of moments is proposed, which provides asymptotically correctly-sized tests without losing its consistency property. 
It is simple to implement and extends to other GARCH-type settings. A simulation study demonstrates the test's size and power properties in finite samples and an empirical application illustrates the testing approach.\\ \\ 
\textbf{Keywords:} Hypothesis Testing; GARCH; Residual bootstrap\\
\textbf{JEL codes:} C12; C14; C22; C58
\end{abstract}

\blfootnote{\hspace{-0.6cm}
%
%$^{\dagger}$e-mail: \href{mailto:a.heinemann@maastrichtuniversity.nl}{a.heinemann@maastrichtuniversity.nl}
%
$^{\dagger}$Department of Quantitative Economics, Maastricht University, Tongersestraat 53, 6211 LM Maastricht, Netherlands. E-mail address: \href{mailto:a.heinemann@maastrichtuniversity.nl}{a.heinemann@maastrichtuniversity.nl} 

}

\end{titlepage}

\newpage

\doublespacing

\section{Introduction}
\label{sec:7.1}

The existence of moments is key to statistical inference in financial time series.
While researchers generally assume that returns are strictly stationary, there is a large dispute to which extend their corresponding moments are finite. In particular many econometricians question the existence of fourth-order moments of returns, whereas some even challenge the existence of second-order moments. In the absence of moments many statistical tools become unreliable such as the ordinary least squares (OLS) estimator, whose asymptotic distribution requires the existence of fourth-order moments.
%
%; for example, the asymptotic distribution of the ordinary least squares (OLS) estimator requires the existence of fourth-order moments. % (c.f.\ \citeauthor{francq2011garch}, \citeyear{francq2011garch}, p.~129).
%
Frequently, returns are modeled as a product of a conditional volatility process and an sequence of innovations. In such case the existence of moments reduces to an inferential problem depending on the parameters of the conditional volatility model  and on characteristics of the innovation process. 
%
%For the well-known GARCH model (\citeauthor{engle1982autoregressive}, \citeyear{engle1982autoregressive}; \citeauthor{bollerslev1986generalized}, \citeyear{bollerslev1986generalized})
%
%Arguably the most popular models for log returns are autoregressive conditional heteroskedasticity (GARCH) models by \cite{engle1982autoregressive} and \cite{bollerslev1986generalized}. 
% 
\cite{ling1999probabilistic} and \cite{ling2002necessary} provide the necessary and sufficient condition for the existence of even-order moments in the well-known GARCH model. Similar results for other GARCH-type models are obtained by \cite{he1999properties}, \cite{ling2002stationarity} and \citeauthor{francq2011garch} (\citeyear{francq2011garch}, Chapter 10). Recently, \cite{francq2018testing} study the existence of moments for GARCH($1$,$1$) processes and derive the asymptotic distribution of the Wald statistic. Observing that the finite sample behavior is not always in par with the asymptotic results, they propose a bootstrap procedure, whose validity they prove for testing second-order stationarity. Unfortunately, neither for higher-order moments nor for higher-order GARCH models results are available. In particular the latter is a non-standard testing problem as the test-statistic is typically based on the spectral radius. In contrast, bootstrap methods are well-studied in conjunction with GARCH-type models \citep{hall2003inference,hidalgo2007goodness,corradi2008bootstrap,shimizu2009bootstrapping,cavaliere2018fixed,beutner2018residual,heinemann2018expected} and are also proven to be suitable in non-standard testing problems \citep{cavaliere2018bootstrap}.
%
% For GARCH-type models various bootstrap methods are studied: the sub-sample bootstrap \citep{hall2003inference}, the block bootstrap \citep{corradi2008bootstrap}, the wild bootstrap \citep{shimizu2009bootstrapping}, the residual bootstrap \citep{pascual2006bootstrap,hidalgo2007goodness,shimizu2009bootstrapping,cavaliere2018fixed,beutner2018residual,heinemann2018expected} and the smoothed bootstrap \citep{heinemann2018smoothed}. Bootstrap techniques do not only offer a powerful alternative to asymptotic theory, but are also proven to be suitable in non-standard testing problems \citep{cavaliere2018bootstrap}.
%
Therefore this paper studies the joint inference on conditional volatility parameters and the innovation moments by means of bootstrap. In particular, we prove the validity of the fixed design-residual bootstrap for a general class of volatility models and propose a bootstrap-based test for the existence of moments in the GARCH($p,q$) model. The testing procedure is simple to implement, provides asymptotically correctly-sized tests (without losing the consistency property) and can easily be extended to other GARCH-type settings.

The remainder of the paper is organized as follows. Section \ref{sec:7.2} describes the model. The joint asymptotic distribution of the quasi-maximum likelihood (QML) estimators of the volatility parameters and the empirical moments of the residuals is derived in Section \ref{sec:7.3}. In Section \ref{sec:7.4} we propose a fixed-design residual bootstrap method and prove its validity under mild assumptions. A bootstrap-based test for the existence of moments in the GARCH($p$,$q$) model is developed in Section \ref{sec:7.5} and extended to other GARCH-type models. A simulation study is conducted in Section \ref{sec:7.6} and an empirical application illustrates the bootstrap-based testing approach. Section \ref{sec:7.7} concludes. Proofs and auxiliary results are collected in the Appendix.

% \begin{itemize}

% 	\item No econometric tools are available for testing moments in higher order GARCH($p$,$q$) 

% 	\item As an application of our theory,
	
%     \item It is generally admitted 

% 	\item It is a well known fact that 

% 	\item Contribution is two-fold. We validate a refined bootstrap approach of \cite{francq2018testing} and extend it to higher
% \end{itemize}

\section{Model}
\label{sec:7.2}

We consider conditional volatility models of the form
\begin{align}
\label{eq:7.2.1}
\epsilon_t = \sigma_t\eta_t
\end{align}
with $t\in \Z$, where $\epsilon_t$ denotes the log-return, $\{\sigma_t\}$ is a volatility process and $\{\eta_t\}$ is a sequence of independent and identically distributed (\iid) variables. The volatility is assumed to be a measurable function of past observations
\begin{align}
\label{eq:7.2.2}
\sigma_{t}=\sigma_{t}(\theta_0)=\sigma(\epsilon_{t-1}, \epsilon_{t-2},\dots;\theta_0),
\end{align}
with $\sigma:\R^\infty\times \Theta\to(0,\infty)$ and $\theta_0$ denotes the true parameter vector belonging to the parameter space $\Theta \subset \R^r$, $r \in \N$. Various commonly used volatility models satisfy \eqref{eq:7.2.1}--\eqref{eq:7.2.2} such as GARCH($p$,$q$); for further examples see  \citeauthor{francq2015risk} (\citeyear{francq2015risk}, Table 1). Frequently we are not only interested in the parameter vector $\theta_0$, but also in characteristics of the innovation distribution.
%
%, say $\mu =(\mu_1,\dots, \mu_d)'= \EE[h(\eta_t)]$ with $h(x)=(x,x^2,\dots,x^d)'$ and $d \in \N$. 
%Later, we consider changes in the function $h$ to account for asymmetric GARCH models, etc.\\ \\
%
The following example illustrates.
\begin{example}
\label{ex:7.1}
Suppose $\{\epsilon_t\}$ follows a GARCH$(1,1)$ process given by \eqref{eq:7.2.1} and $\sigma_{t}^2 = \omega_0 + \alpha_0 \epsilon_{t-1}^2+ \beta_0 \sigma_{t-1}^2$, where $\theta_0 = (\omega_0,\alpha_0,\beta_0)'\in (0,\infty)\times[0,\infty)\times[0,1)$. Writing $\mu_{k}= \EE[\eta_t^{k}]$ for $k \in \N$, the necessary and sufficient condition for the existence of the fourth moment  $\EE[\epsilon_t^4]$ is 
\begin{align}
\label{eq:767698534}
\beta_0^2+2\alpha_0\beta_0\mu_2 +\alpha_0^2\mu_4<1.
\end{align}
\cite{francq2018testing} propose a Wald statistic based on  QML to test for \eqref{eq:767698534}.
\end{example}
%
%\begin{example}
%\label{ex:7.2}
%Suppose $\{\epsilon_t\}$ follows a T-GARCH$(1,1)$ process given by \eqref{eq:7.2.1} and $\sigma_{t+1} = \omega_0 + \alpha_0^+ \epsilon_t^+ + \alpha_0^- \epsilon_t^- + \beta_0 \sigma_{t}$, where $\theta_0 = (\omega_0,\alpha_0^+,\alpha_0^-,\beta_0)'\in (0,\infty)\times[0,\infty)^2\times[0,1)$. The necessary and sufficient condition for the existence of $\EE[|\epsilon_t|^d]$ is 
%
%\begin{align}
%\label{eq:87879759975}
%\sum_{k=0}^d \binom{d}{k} \Big(\big(\alpha_0^+\big)^k \mu_k^+ + \big(\alpha_0^-\big)^k \mu_k^-\Big) \beta_{0}^{m-k} \mu_{2k}<1
%\end{align}
%
%\end{example}
%
We collect the moment characteristics of the innovation distribution in a vector $\mu = \EE[h(\eta_t)]$, where 
%$h$ is some model-specific function with domain $\R$. For simplicity,
we confine ourselves here 
%--unless stated otherwise--
to the even moments, i.e.\  
\begin{align}
\label{eq:2187476}
%\mu = \EE[h(\eta_t)] \qquad \text{with} \qquad h(x)=(x,x^2,\dots,x^d)'
h(x)=(x^2,\dots,x^{2m})'
\end{align}
for some $m \in \N$. 
%\footnote{For alternative functional forms, we refer to Table \ref{tab:7.1}.}
%, corresponding to the GARCH model.
% For alternative functional forms, we refer to Table \ref{tab:7.1}. 
%for some function
%$\mu = \EE[h(\eta_t)]$, where 
%\begin{align}
%\mu = \EE[h(\eta_t)] \qquad \text{with} \qquad h(x)=(x,x^2,\dots,x^d)'
%\end{align}
%
%For the time being 
%while 
%and $d \in \N$. Here we confine ourselves to non-central moments for simplicity.\footnote{In Remark \ref{rem:7.3} we consider other functional forms of $h$.} 
Generally, $\mu$ is unknown
%to the researcher 
and needs to estimated just like $\theta_0$.

\section{Estimation}
\label{sec:7.3}

For the estimation of the parameters  $\theta_0$ and $\mu$ we use a two-step procedure, which is also employed by \citeauthor{francq2015risk} (2018). First, the vector of the conditional volatility parameters $\theta_0$ is estimated by QML. Since the conditional volatility $\sigma_{t}(\theta) = \sigma(\epsilon_{t-1},\dots,\epsilon_{1}, \epsilon_{0},\epsilon_{-1},\dots;\theta)$ can generally not be determined completely given a sample $\epsilon_1 ,\dots, \epsilon_n$, we replace the unknown presample observations by arbitrary values, say $\tilde{\epsilon}_t$, $t\leq 0$, yielding $\tilde{\sigma}_{t}(\theta) = \sigma(\epsilon_{t-1},\dots,\epsilon_{1}, \tilde{\epsilon}_{0},\tilde{\epsilon}_{-1},\dots;\theta)$. Then the QML estimator of $\theta_0$ is defined as a measurable solution $\hat{\theta}_n$ of 
\begin{align}
\label{eq:7.3.3}
\hat{\theta}_n=\arg\max_{\theta \in \Theta} \frac{1}{n}\sum_{t=1}^n \tilde{\ell}_t(\theta) \qquad \text{with} \qquad \tilde{\ell}_t(\theta)=-\frac{1}{2}\bigg(\frac{\epsilon_t}{\tilde{\sigma}_t(\theta)}\bigg)^2-\log \tilde{\sigma}_t(\theta).
\end{align}
In the second step, the first-step residuals are obtained, i.e. $\hat{\eta}_t=\epsilon_t/\tilde{\sigma}_t(\hat{\theta}_n)$, and the moments estimated:
\begin{align}
\label{eq:787979875}
\hat{\mu}_{n} = \frac{1}{n}\sum_{t=1}^n h(\hat{\eta}_t).
\end{align}
We first list several assumptions essential to the following analysis. Whereas in this paper we mainly focus on GARCH($p$,$q$) processes, the assumptions below are stated in a form that can readily applied to other GARCH-type processes (see Remark \ref{rem:7.3}).
\begin{assumption}{(Compactness)}
\label{as:7.1}
$\Theta$ is a compact subset of $\R^r$.
\end{assumption}

\begin{assumption}{(Stationarity \& Ergodicity)}
\label{as:7.2}
$\{\epsilon_t\}$ is a strictly stationary and ergodic solution of \eqref{eq:7.2.1} with \eqref{eq:7.2.2}.
\end{assumption}

\begin{assumption}{(Volatility process)}
\label{as:7.3}
For any real sequence $\{x_i\}$, the function $\theta\to\sigma(x_1,x_2,\dots;\theta)$ is continuous.  Almost surely, $\sigma_t(\theta)>\underline{\omega}$ for any $\theta \in \Theta$ and some $\underline{\omega}>0$ and $\EE[\sigma_t^s(\theta_0)]<\infty$ for some $s>0$. Moreover, for any $\theta \in \Theta$, we assume $\sigma_t(\theta_0)/\sigma_t(\theta)=1$ almost surely (a.s.) if and only if $\theta=\theta_0$.
\end{assumption}

\begin{assumption}{(Initial conditions)}
\label{as:7.4}
There exists a constant $\rho \in (0,1)$ and a random variable $C_1$ measurable with respect to $\mathcal{F}_0$ and $\EE[C_1^s]<\infty$ for some $s>0$ such that
\begin{enumerate}[(i)]
\item \label{as:7.4.1} $\sup_{\theta \in \Theta}|\sigma_t(\theta)-\tilde{\sigma}_t(\theta)|\leq C_1 \rho^t$;

\item \label{as:7.4.2} $\theta\to \sigma(x_1, x_2, \dots;\theta)$ has continuous second-order derivatives satisfying
\begin{align*}
\sup_{\theta \in \Theta}\bigg|\bigg|\frac{\partial \sigma_t(\theta)}{\partial \theta}-\frac{\partial \tilde{\sigma}_t(\theta)}{\partial \theta}\bigg|\bigg|\leq  C_1 \rho^t, \qquad \quad \sup_{\theta \in \Theta}\bigg|\bigg|\frac{\partial^2 \sigma_t(\theta)}{\partial \theta \partial \theta'}-\frac{\partial^2 \tilde{\sigma}_t(\theta)}{\partial \theta\partial \theta'}\bigg|\bigg|\leq C_1 \rho^t,
\end{align*}
where $||\cdot||$ denotes the Euclidean norm.
\end{enumerate}
\end{assumption}

\begin{assumption}{(Innovation process)} 
\label{as:7.5}
The innovations $\{\eta_t\}$ satisfy

\begin{enumerate}[(i)]
\item  \label{as:7.5.1}  $\eta_t\overset{iid}{\sim}F$ with $F$ being continuous, $\mu_2=1$, $\mu_4<\infty$ and $\eta_t$ is independent of $\{\epsilon_u:u<t\}$;

\item  \label{as:7.5.3}  $\EE\big[||h(\eta_t)||^2\big]<\infty$.% for some $d$ (to be specified);

%\item  \label{as:7.5.4} $M=\sup_{x\in \R}|x|f(x)<\infty$.
\end{enumerate}
\end{assumption}

\begin{assumption}{(Interior)}
 \label{as:7.6}
$\theta_0$ belongs to the interior of $\Theta$ denoted by $\mathring{\Theta}$.
\end{assumption}

\begin{assumption}{(Non-degeneracy)}
\label{as:7.7}
There does not exist a non-zero $\lambda\in \R^r$ such that $\lambda'\frac{\partial \sigma_t(\theta_0)}{\partial \theta}=0$ almost surely.
\end{assumption}

% \begin{assumption}{(Monotonicity)}
% \label{as:7.8}
% For any real sequence $\{x_i\}$ and for any $\theta_1,\theta_2 \in \Theta$ satisfying $\theta_1\leq \theta_2$ componentwise, we have $\sigma(x_1,x_2,\dots;\theta_1)\leq \sigma(x_1,x_2,\dots;\theta_2)$.
% \end{assumption}

\begin{assumption}{(Moments)}
\label{as:7.9}
There exists a neighborhood $\mathscr{V}(\theta_0)$ of $\theta_0$ such that the following variables have finite expectation:
\begin{align*}
\text{(i)}\sup_{\theta \in \mathscr{V}(\theta_0)}\bigg|\frac{ \sigma_t(\theta_0)}{\sigma_t(\theta)}\bigg|^a, \qquad \;\;\: \text{(ii)} \sup_{\theta \in \mathscr{V}(\theta_0)}\bigg|\bigg|\frac{1}{\sigma_t(\theta)}\frac{\partial \sigma_t(\theta)}{\partial \theta}\bigg|\bigg|^{b}, \qquad \;\;\: \text{(iii)} \sup_{\theta \in \mathscr{V}(\theta_0)}\bigg|\bigg|\frac{1}{\sigma_t(\theta)}\frac{\partial^2 \sigma_t(\theta)}{\partial \theta \partial \theta'}\bigg|\bigg|^c
\end{align*}
for some $a$, $b$, $c$ (to be specified).
\end{assumption}

\begin{assumption}{(Scaling Stability)}
\label{as:7.10}
 There exists a function $g$ such that for any $\theta \in \Theta$, for any $\lambda>0$, and any real sequence $\{x_i\}$
\begin{align*}
\lambda \sigma(x_1,x_2,\dots;\theta)=\sigma(x_1,x_2,\dots;\theta_\lambda),
\end{align*}
where $\theta_\lambda=g(\theta,\lambda)$ and $g$ is differentiable in $\lambda$.
\end{assumption}
%
%
% On the basis of the previous assumptions we state the strong consistency result of \citeauthor{francq2015risk} (\citeyear{francq2015risk}, Theorem 1) to the estimator of the innovation moments.
% %
% \begin{theorem}\textit{(Strong Consistency)}
% \label{thm:7.1} Under Assumptions \ref{as:7.1}--\ref{as:7.3}, \ref{as:7.4}(\ref{as:7.4.1}) and \ref{as:7.5}(\ref{as:7.5.1}) the estimator  in \eqref{eq:7.3.3} is strongly consistent, i.e. $\hat{\theta}_n \overset{a.s.}{\to} \theta_0$.
% %
% If in addition $\EE\big[|\eta_t|^{d}\big]<\infty$ and Assumption \ref{as:7.9}(i) holds with $a=d$, then the  estimator in \eqref{eq:787979875} satisfies $\hat{\mu}_{n} \overset{a.s.}{\to} \mu$.
% \end{theorem}
% %
% \begin{proof}
% \citeauthor{francq2015risk} (\citeyear{francq2015risk}, Theorem 1) establish $\hat{\theta}_n \overset{a.s.}{\to} \theta_0$. The second claim follows from \citeauthor{beutner2018residual} (\citeyear{beutner2018residual}, Lemma 2).
% \end{proof}
%
The assumptions are fairly standard in the literature; for a discussion we refer to \cite{francq2015risk} and \cite{beutner2018residual}.
To lighten notation, we henceforth write $D_t(\theta) =\frac{1}{\sigma_t(\theta)}\frac{\partial\sigma_t(\theta)}{\partial \theta}$  and drop the argument when evaluated at the true parameter, i.e.\ $D_t=D_t(\theta_0)$. In addition, we define $d=deg(h)$, the highest polynomial degree of the function $h$, 
%denote the polynomial degree of the function $h$ by $d$, i.e.\ $d=deg(h)$,
which reduces to $2m$ using \eqref{eq:2187476}. The next result provides the joint asymptotic distribution of $\hat{\theta}_n$ and $\hat{\mu}_{n}$. A similar result for a GARCH($p$,$q$) model can be found in \cite{francq2018testing}.
\begin{theorem}\textit{(Asymptotic distribution)}
\label{thm:7.2} Suppose Assumptions \ref{as:7.1}--\ref{as:7.10} hold with $a=\max\{4,2d\}$, $b=4$ and $c=2$. Then, we have
\begin{align}
\label{eq:7.3.6}
      \begin{pmatrix}
      \sqrt{n}(\hat{\theta}_n-\theta_0)\\
    \sqrt{n}(\hat{\mu}_{n} - \mu)
      \end{pmatrix}
\overset{d}{\to}N\big(0, \Sigma \big) \qquad \mbox{with}\qquad
\Sigma=
      \begin{pmatrix}
      \frac{\mu_4-1}{4}J^{-1} & -J^{-1}\Omega \nu'\\
    -\nu \Omega'J^{-1} & \Xi
      \end{pmatrix},
\end{align}
where $\Omega = \EE[D_t]$, $J=\EE[D_tD_t']$, $\nu = \EE\big[\eta_t\frac{\partial h(\eta_t)}{\partial x}\big]$, $ \Xi = \frac{\mu_4-1}{4}\nu \nu' +\frac{1}{2}(\xi \nu'+\nu \xi')+\Upsilon$, $\Upsilon = \Var[h(\eta_t)]$ and $\xi = \Cov[h(\eta_t),\eta_t^2]$.
\end{theorem}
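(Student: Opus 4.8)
The plan is to derive the joint limiting distribution via a joint central limit theorem for the score of the QML objective and the empirical-moment process, treating the second step as a function of the first-step estimator through a linearization. The backbone is the standard QMLE asymptotic expansion. First I would establish that $\sqrt{n}(\hat{\theta}_n-\theta_0)=J^{-1}\frac{1}{\sqrt{n}}\sum_{t=1}^n \frac{\partial \tilde{\ell}_t(\theta_0)}{\partial \theta}+o_{\PP}(1)$, where the score simplifies, using $\epsilon_t=\sigma_t\eta_t$ and $\mu_2=1$, to $\frac{\partial \ell_t(\theta_0)}{\partial \theta}=(\eta_t^2-1)D_t$. The initial-value effects are asymptotically negligible by Assumption \ref{as:7.4}, and the Hessian converges in probability to $2J$ by Assumptions \ref{as:7.3}, \ref{as:7.7}, \ref{as:7.9}, the ergodic theorem and a uniform law of large numbers; interiority (Assumption \ref{as:7.6}) licenses the first-order condition. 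This yields the $(1,1)$ block $\frac{\mu_4-1}{4}J^{-1}$ upon noting $\Var[(\eta_t^2-1)D_t]=(\mu_4-1)J$ and the $\frac14$ from the Hessian scaling.

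Next I would linearize the moment estimator. Writing $\hat{\eta}_t=\epsilon_t/\tilde{\sigma}_t(\hat{\theta}_n)=\eta_t\,\sigma_t(\theta_0)/\tilde{\sigma}_t(\hat{\theta}_n)$, a first-order Taylor expansion of $h(\hat\eta_t)$ around $\eta_t$ together with the scaling-stability structure (Assumption \ref{as:7.10}) gives $\hat{\eta}_t\approx\eta_t\big(1-D_t'(\hat{\theta}_n-\theta_0)\big)$, so that
\begin{align*}
\hat{\mu}_n-\mu=\frac{1}{n}\sum_{t=1}^n\big(h(\eta_t)-\mu\big)-\Big(\frac{1}{n}\sum_{t=1}^n \eta_t\frac{\partial h(\eta_t)}{\partial x}\,D_t'\Big)(\hat{\theta}_n-\theta_0)+o_{\PP}(n^{-1/2}).
\end{align*}
By the ergodic theorem the bracketed average converges to $\EE[\eta_t\frac{\partial h(\eta_t)}{\partial x}D_t']=\nu\,\Omega'$, using independence of $\eta_t$ from the $\mathcal{F}_{t-1}$-measurable $D_t$ (Assumption \ref{as:7.5}). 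Substituting the score representation for $\sqrt{n}(\hat{\theta}_n-\theta_0)$ then expresses $\sqrt{n}(\hat{\mu}_n-\mu)$ as a single sample average of the influence function $h(\eta_t)-\mu-\nu\,\Omega'J^{-1}(\eta_t^2-1)D_t$, which is the key device turning the two-step estimator into one joint empirical process. The moment conditions with $a=\max\{4,2d\}$, $b=4$, $c=2$ in Assumption \ref{as:7.9} are exactly what control the remainder and the degree-$d$ growth of $h$.

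Finally I would stack the two influence functions and apply a central limit theorem for stationary ergodic martingale-difference-type arrays (or Billingsley's CLT, since $(\eta_t^2-1)D_t$ and $h(\eta_t)-\mu$ inherit the martingale-difference property from the innovation filtration). Computing the asymptotic covariance then reduces to evaluating cross-moments: the $(1,1)$ block as above; the off-diagonal block from $\Cov\big[J^{-1}(\eta_t^2-1)D_t,\;h(\eta_t)-\nu\Omega'J^{-1}(\eta_t^2-1)D_t\big]$, where the independence of $\eta_t$ and $D_t$ lets the $\eta$-moments and the $D$-moments factor, delivering $-J^{-1}\Omega\nu'$ after cancellation; and the $(2,2)$ block $\Xi$, which collects $\Var[h(\eta_t)]=\Upsilon$, the contribution $\frac{\mu_4-1}{4}\nu\nu'$ from the estimation of $\theta_0$ propagated through $\nu\Omega'J^{-1}$, and the cross terms $\frac12(\xi\nu'+\nu\xi')$ arising from the covariance between $h(\eta_t)$ and the $(\eta_t^2-1)$-driven correction, with $\xi=\Cov[h(\eta_t),\eta_t^2]$. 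I expect the main obstacle to be the careful bookkeeping of these cross-moments — in particular disentangling which expectations factor through independence and which do not — together with rigorously controlling the Taylor remainder and the initialization error uniformly, so that every $o_{\PP}(n^{-1/2})$ term is genuinely negligible under the stated moment assumptions.
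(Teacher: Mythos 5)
Your proposal follows essentially the same route as the paper's proof: negligibility of the initialization error via Assumption \ref{as:7.4} and an $s$-moment argument, a Taylor linearization of $h(\hat{\eta}_t)$ yielding $\sqrt{n}(\hat{\mu}_n-\mu)=\frac{1}{\sqrt{n}}\sum_{t=1}^n\big(h(\eta_t)-\mu\big)-\nu\Omega'\sqrt{n}(\hat{\theta}_n-\theta_0)+o_{\PP}(1)$ with the second-order remainder controlled by the moment bounds of Assumption \ref{as:7.9}, substitution of the QML Bahadur expansion (which the paper imports from Francq and Zako\"ian rather than re-deriving), a joint martingale-difference CLT via Cram\'er--Wold, and a final sandwich computation of $\Sigma$. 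The only blemish is the factor in your displayed expansion, which should read $\sqrt{n}(\hat{\theta}_n-\theta_0)=\frac{1}{2}J^{-1}\frac{1}{\sqrt{n}}\sum_{t=1}^n(\eta_t^2-1)D_t+o_{\PP}(1)$ to be consistent with the Hessian limit $2J$ you yourself state --- a slip you implicitly correct when extracting the $\frac{\mu_4-1}{4}$ factor for the $(1,1)$ block.
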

The asymptotic distribution in Theorem \ref{thm:7.2} can be used to perform inference on parameters after having obtained a consistent estimator for $\Sigma$. A powerful alternative to perform statistical inference provide bootstrap methods.

\section{Bootstrap}
\label{sec:7.4}

We employ a fixed-design residual bootstrap scheme as in \cite{cavaliere2018fixed} and \cite{beutner2018residual} to approximate the distribution of the estimators in \eqref{eq:7.3.3}--\eqref{eq:787979875}. We indicate the bootstrap quantities by a  superscript $^*$ and use the usual bootstrap notation:  “$\overset{p^*}{\to}$", “$\overset{d^*}{\to}$", “$O_{p^*}(1)$", “$o_{p^*}(1)$", $\PP^*$ and $\EE^*$ (cf.\ \citeauthor{chang2003sieve}, \citeyear{chang2003sieve}).

%For the setup of the bootstrap algorithm we assume the availability of a consistent estimator of $\theta_0$ denoted by $\check{\theta}_n$. In many applications $\check{\theta}_n$ coincides with $\hat{\theta}_n$, yet in Section \ref{sec:7.5} we consider a constrained estimator version.
%
\begin{algorithm}\textit{(Fixed-design residual bootstrap)}
\label{alg:5.1}
\begin{enumerate}

\item For $t=1,\dots, n$, generate  $\eta_t^* \overset{iid}{\sim} \hat{\mathbbm{F}}_n$ and the bootstrap observation $\epsilon_t^* = \tilde{\sigma}_t(\hat{\theta}_n) \eta_t^*$.

\item Calculate the bootstrap estimator 
\begin{align}
\label{eq:7.4.1}
\hat{\theta}_n^* = \arg \max_{\theta \in \Theta}\frac{1}{n}\sum_{t=1}^n \ell_t^*(\theta) \quad \text{with} \quad \ell_t^*(\theta)=-\frac{1}{2}\bigg(\frac{\epsilon_t^{*}}{\tilde{\sigma}_t(\theta)}\bigg)^2-\log \tilde{\sigma}_t(\theta).
\end{align}
\item For $t=1,\dots,n$ compute the bootstrap residual $\hat{\eta}_t^* = \epsilon_t^*/\tilde{\sigma}_t(\hat{\theta}_n^*)$
and obtain 
\begin{align}
\label{eq:7.4.2}
\hat{\mu}_{n}^*: = \frac{1}{n}\sum_{t=1}^n h\big(\hat{\eta}_t^*\big).
\end{align}
\end{enumerate}
\end{algorithm}
\noindent
The asymptotic validity of the bootstrap procedure is stated in the following theorem.
\begin{theorem}\textit{(Boostrap consistency)}
\label{thm:7.3}
If Assumptions \ref{as:7.1}--\ref{as:7.10} hold with $a=- 12, \max\{2d,12\}$, $b=12$ and $c=6$, then
\begin{align}
\label{eq:76784132}
      \begin{pmatrix}
      \sqrt{n}(\hat{\theta}_n^*-\hat{\theta}_n)\\
    \sqrt{n}(\hat{\mu}_{n}^* - \hat{\mu}_{n})
      \end{pmatrix}
\overset{d^*}{\to} N(0, \Sigma),
\end{align}
%
%where $\overset{d}{\to}_p$ denotes convergence in distribution given the original sample 
in probability.
\end{theorem}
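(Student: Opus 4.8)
The plan is to reproduce, inside the bootstrap world, the linearisation that underlies Theorem~\ref{thm:7.2}, and then to invoke a conditional central limit theorem for a triangular array that is row-wise independent given the data. The decisive structural feature is that, conditionally on the sample, the fixed-design scheme of Algorithm~\ref{alg:5.1} makes $\{\eta_t^*\}$ i.i.d.\ draws from $\hat{\mathbbm{F}}_n$ while the ``design'' $\{\tilde\sigma_t(\hat\theta_n)\}$, and hence the derivatives $\tilde D_t(\hat\theta_n)$ (the $\tilde\sigma_t$-analogue of $D_t(\hat\theta_n)$), are frozen. Since $\epsilon_t^*=\tilde\sigma_t(\hat\theta_n)\eta_t^*$, the value $\hat\theta_n$ is exactly the maximiser of the limiting bootstrap criterion and $\hat\mu_{n}=\EE^*[h(\eta_t^*)]$ is exactly the bootstrap population moment, so the recentring in \eqref{eq:76784132} is the natural one and the conditional means of the relevant summands vanish. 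Throughout, Assumption~\ref{as:7.4} permits replacing $\tilde\sigma_t$ and its derivatives by their stationary counterparts up to a $C_1\rho^t$ error that is asymptotically negligible after multiplication by $\sqrt n$.

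First I would establish bootstrap consistency, $\hat\theta_n^*-\hat\theta_n\overset{p^*}{\to}0$ in probability, from a bootstrap uniform law of large numbers for $n^{-1}\sum_t\ell_t^*(\theta)$ combined with the identification condition in Assumption~\ref{as:7.3}. A Taylor expansion of the bootstrap score around $\hat\theta_n$, together with a bootstrap law of large numbers showing $-n^{-1}\sum_t\partial^2\ell_t^*(\theta)/\partial\theta\partial\theta'\overset{p^*}{\to}2J$ uniformly on a neighbourhood, then yields the Bahadur representation $\sqrt n(\hat\theta_n^*-\hat\theta_n)=(2J)^{-1}n^{-1/2}\sum_t(\eta_t^{*2}-1)\tilde D_t(\hat\theta_n)+o_{p^*}(1)$. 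Here the scaling-stability Assumption~\ref{as:7.10} plays a quiet but essential role: differentiating the scaling identity in $\lambda$ gives $D_t(\theta)'\dot g(\theta)=1$, whence the first-order condition $n^{-1}\sum_t(\hat\eta_t^2-1)\tilde D_t(\hat\theta_n)=0$ forces $n^{-1}\sum_t\hat\eta_t^2=1$; consequently $\EE^*[\eta_t^{*2}-1]=0$ and the score summand is exactly centred.

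For the second step I expand $h(\hat\eta_t^*)$ using $\hat\eta_t^*=\eta_t^*\,\tilde\sigma_t(\hat\theta_n)/\tilde\sigma_t(\hat\theta_n^*)=\eta_t^*\big(1-\tilde D_t(\hat\theta_n)'(\hat\theta_n^*-\hat\theta_n)+o_{p^*}(1)\big)$, obtaining $\sqrt n(\hat\mu_{n}^*-\hat\mu_{n})=n^{-1/2}\sum_t\big(h(\eta_t^*)-\hat\mu_{n}\big)-\big(n^{-1}\sum_t\eta_t^*\,\partial h(\eta_t^*)/\partial x\,\tilde D_t(\hat\theta_n)'\big)\sqrt n(\hat\theta_n^*-\hat\theta_n)+o_{p^*}(1)$, where the correction matrix converges in probability to $\nu\Omega'$. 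Stacking the two representations writes the left-hand side of \eqref{eq:76784132} as $n^{-1/2}\sum_t\psi_{t,n}^*+o_{p^*}(1)$ with the $\psi_{t,n}^*$ conditionally independent and mean zero across $t$; a conditional Cramér--Wold device reduces matters to a scalar Lindeberg--Feller CLT, and the conditional covariance of $n^{-1/2}\sum_t\psi_{t,n}^*$ is shown to converge in probability to $\Sigma$. This last computation parallels the one behind Theorem~\ref{thm:7.2}, using $n^{-1}\sum_t\tilde D_t(\hat\theta_n)\tilde D_t(\hat\theta_n)'\to J$, the convergence of the residual empirical moments such as $n^{-1}\sum_t\hat\eta_t^4\to\mu_4$ and $n^{-1}\sum_t h(\hat\eta_t)h(\hat\eta_t)'\to\EE[h(\eta_t)h(\eta_t)']$, and again the identity $\Omega'J^{-1}\Omega=1$ implied by Assumption~\ref{as:7.10}.

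The main obstacle, and the reason the moment requirements are strengthened to $a=\max\{2d,12\}$, $b=12$ and $c=6$, is the uniform control of the remainder terms and the verification of the Lindeberg condition \emph{in probability} rather than merely along almost every sample path. Because the summands carry the data-dependent weights $\tilde D_t(\hat\theta_n)$ and powers of $\eta_t^*$ up to degree $d=2m$, bounding the second-order Taylor remainders of both the score and of $h(\hat\eta_t^*)$, as well as the maximal summand entering the Lindeberg bound, requires higher moments of $\eta_t$ and of the normalised volatility derivatives than Theorem~\ref{thm:7.2} needs. I expect the bulk of the work to consist in showing that these remainders are $o_{p^*}(1)$ in probability, which I would handle by bounding their bootstrap expectations through the strengthened Assumption~\ref{as:7.9} and the convergence of the residual empirical moments, supplemented by a bootstrap stochastic-equicontinuity argument for the Hessian over a shrinking neighbourhood of $\hat\theta_n$.
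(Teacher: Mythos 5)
Your proposal is correct and takes essentially the same route as the paper: expand the bootstrap moments $\hat{\mu}_n^*$ around $\hat{\theta}_n$ with correction matrix converging to $\nu\Omega'$, insert the Bahadur representation $\sqrt{n}(\hat{\theta}_n^*-\hat{\theta}_n)=\frac{1}{2}J^{-1}n^{-1/2}\sum_t \hat{D}_t(\eta_t^{*2}-1)+o_{p^*}(1)$, stack the two, and conclude by a conditional Cram\'er--Wold/Lindeberg triangular-array CLT (the paper's Lemma \ref{lem:7.2}), with the exact centring $\EE^*[\eta_t^{*2}]=1$ obtained from the scaling-stability Assumption \ref{as:7.10} precisely as you argue. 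The only difference is packaging: the paper imports the bootstrap score expansion and the auxiliary convergences ($\hat{J}_n\to J$, $\hat{\Omega}_n\to\Omega$, $\EE^*[\eta_t^{*k}]\to\mu_k$, stochastic boundedness of the Hessian-type averages) from Beutner and Heinemann (2018) rather than re-deriving them in-line as you propose, so there is no substantive gap.
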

\begin{remark}
\label{rem:7.1}
The estimator $\hat{\theta}_n$ in the first step of Algorithm \ref{alg:5.1} can be replaced by any consistent estimator of $\theta_0$, say $\check{\theta}_n$. A close inspection of the proof of Theorem \ref{thm:7.3} reveals that the bootstrap's consistency follows after an appropriate standardization, i.e.\  replace $\hat{\theta}_n$ by $\check{\theta}_n$ in \eqref{eq:76784132}.
\end{remark}
In the subsequent section we employ Theorem  \ref{thm:7.3} and Remark \ref{rem:7.1} to derive a bootstrap-based test for the existence of moments in the GARCH model.

\section{Bootstrap Test for the Existence of Moments}
 \label{sec:7.5}

%We begin with the GARCH model and later indicate the relevant changes when turning to GARCH extensions.
%\subsection{GARCH($p$,$q$) Model}
We consider a GARCH($p$,$q$) model, in which the recursive form of \eqref{eq:7.2.2} is given by
\begin{align}
\label{eq:897750}
    \sigma_{t}^2 = \omega_0+ \sum_{i=1}^q\alpha_{0i} \epsilon_{t-i}^2+ \sum_{j=1}^p\beta_{0j} \sigma_{t-j}^2,
\end{align}
where $\theta_0 = (\omega_0,\alpha_{01},\dots,\alpha_{0q},\beta_{01},\dots, \beta_{0p})'\in \R_{>0} \times \R_{\geq 0}^{p+q}$. 
%satisfies $\omega_0>0$, $\alpha_{0i}\geq 0$ for all $i$ and $\beta_{0j}\geq 0$ for all $j$.
We are interested in testing whether for this GARCH process the moment $\EE[\epsilon_t^{2m}]$ exists.
%, where we focus on even moments, i.e.\ $d=2m$ for some $m \in \N$.
\cite{ling1999probabilistic} and \cite{ling2002necessary} provide the necessary and sufficient condition for the existence of even moments of model \eqref{eq:7.2.1} and \eqref{eq:897750}. For any matrix $A$ we write $||A||_S$ to denote its spectral norm, i.e.\ 
 %largest singular value of $A$,
$||A||_S=\sqrt{\lambda_{\max} (A'A)}$, %where $\lambda_{\max} (A'A)$ is the maximum eigenvalue of $A'A$.
and set $A^{\otimes m} = A \otimes A \otimes \dots \otimes A$ ($m$ factors), where $\otimes$ is the Kronecker product. Then the moment $\EE[\epsilon_t^{2m}]$ of the GARCH process is finite if and only if $T=\big|\big|\EE[A_t^{\otimes m}]\big|\big|_S<1$,
%
% \begin{align}
% \big|\big|\EE[A_t^{\otimes m}]\big|\big|_S<1,
% \end{align}
%
where $A_t =A(\theta_0,\eta_t)$ and
\begin{align}
\label{eq:689954}
A(\theta,\eta) = \left(\begin{array}{@{}ccc|ccc@{}}
    \alpha_1 \eta^2 & \dots & \alpha_q \eta^2 & \beta_1 \eta^2 & \dots & \beta_p \eta^2 \\
     & I_{(q-1)\times(q-1)} & O_{(q-1)\times 1} &  & O_{(q-1)\times p} & \\\hline
    \alpha_1  & \dots & \alpha_q  & \beta_1  & \dots & \beta_p  \\
     & O_{(p-1)\times q} &  &  & I_{(p-1)\times(p-1)} & O_{(p-1)\times 1}
  \end{array}\right).
\end{align}
We are interested in testing the null hypothesis $H_0$: $T < 1$ against the alternative hypothesis $H_1$: $T\geq 1$. As usual in hypothesis testing where the null hypothesis is characterized by an open set, the test is in fact constructed for the closure of $H_0$, i.e.\ 
%
% \begin{align}
% \label{eq:998431298}
% H_0: T < 1 \qquad \quad \text{against} \qquad \quad H_1: T\geq 1.
% \end{align}
\begin{align}
\label{eq:998431298}
\bar{H}_0: T \leq 1 \qquad \quad \text{against} \qquad \quad \bar{H}_1: T > 1.
\end{align}
%
% Alternatively, one may be interested in testing 
% %
% \begin{align}
% \label{eq:547984}
% \tilde{H}_0:T\geq 1 \qquad \quad \text{against} \qquad \quad \tilde{H}_1:T< 1.
% \end{align}
%
%$H_0: T<1$ against the alternative $H_1: T\geq 1$.
%As usual in hypothesis testing where the null hypothesis is characterized by an open set, the test is in fact constructed for the closure of the $H_0$, i.e. $\bar{H}_0: T\leq 1$ against the alternative $\bar{H}_1: T> 1$.
%
% \begin{align}
% \bar{H}_0: T\leq 1 \qquad \quad \text{vs.} \qquad \quad \bar{H}_1: T> 1.
% \end{align}
%
Before proceeding with the test statistic, note that $T$ can be expressed in terms of $\theta_0$ and $\mu$. To illustrate this fact, we review the GARCH($1$,$1$) model from Example \ref{ex:7.1}.
\begin{Example}\textbf{1.} (\textit{continued})
We observe that the left-hand side of \eqref{eq:767698534}  corresponds to $T$ for $m=2$. Further, for $p=q=1$ we find that $A_t$ reduces to $A_t = (\eta_t^2, 1)'(\alpha_{01}, \beta_{01})$, such that for general $m$ we have $\EE[A_t^{\otimes m}]=\EE\big[(\eta_t^2, 1)^{\prime \otimes m}\big](\alpha_{01}, \beta_{01})^{\otimes m}$. The latter possesses a single non-zero eigenvalue (c.f.\ \citeauthor{francq2011garch}, \citeyear{francq2011garch}, p.\ 45) given by
\begin{align}
\label{eq:767447}
\big|\big|\EE[A_t^{\otimes m}]\big|\big|_S=\sum_{k=0}^m \binom{m}{k} \alpha_{01}^k \beta_{01}^{m-k} \mu_{2k}.
\end{align}
\end{Example}
To appreciate why $T$ is a function of $\theta_0$ and $\mu$
%this claim holds
also in higher order GARCH models, we state the following proposition. 
%
%Hence, in the GARCH($1$,$1$), $\big|\big|\EE[A_t^{\otimes m}]\big|\big|_S$ can be expressed as a function of $\theta$ and $\mu$ (in fact only the even moments matter here). The question arises whether this holds true in higher-order GARCH models. The answer is affirmative as the following Lemma indicates.
%
\begin{proposition}
\label{prop:7.1}
For all $m \in \N$, we have $A(\theta,\eta)^{\otimes m}= \sum_{k=0}^m B_{k,m}(\theta) \eta^{2k}$ with $A(\theta,\eta)$ given in \eqref{eq:689954} and
$\big\{B_{k,m}(\theta): k=0,1,\dots,m\big\}$ is a sequence of matrices, where each matrix has dimension $(p+q)^m\times(p+q)^m$ and depends on $\theta$.
\end{proposition}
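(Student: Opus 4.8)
The plan is to prove the statement by induction on $m$, exploiting the bilinearity of the Kronecker product and the fact that the matrix $A(\theta,\eta)$ in \eqref{eq:689954} is affine in $\eta^2$. First I would write $A(\theta,\eta) = B_{0,1}(\theta) + B_{1,1}(\theta)\eta^2$, where $B_{1,1}(\theta)$ collects the entries carrying a factor $\eta^2$ (the first row together with the $\alpha_i\eta^2$ and $\beta_j\eta^2$ terms) and $B_{0,1}(\theta)$ collects the constant entries (the shift/identity blocks and the lower $\alpha_i,\beta_j$ row). This establishes the base case $m=1$ with exactly two matrices $B_{0,1}(\theta)$ and $B_{1,1}(\theta)$, each of dimension $(p+q)\times(p+q)$ and depending only on $\theta$.

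For the inductive step I would assume $A(\theta,\eta)^{\otimes m} = \sum_{k=0}^m B_{k,m}(\theta)\eta^{2k}$ and compute
\begin{align*}
A(\theta,\eta)^{\otimes(m+1)} = A(\theta,\eta)^{\otimes m}\otimes A(\theta,\eta) = \Bigg(\sum_{k=0}^m B_{k,m}(\theta)\eta^{2k}\Bigg)\otimes\big(B_{0,1}(\theta)+B_{1,1}(\theta)\eta^2\big).
\end{align*}
Using that the Kronecker product is bilinear, this expands into a double sum of terms of the form $\big(B_{k,m}(\theta)\otimes B_{j,1}(\theta)\big)\eta^{2(k+j)}$ for $k=0,\dots,m$ and $j\in\{0,1\}$. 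The exponent $2(k+j)$ ranges over $0,\dots,2(m+1)$ in steps corresponding to $k+j\in\{0,\dots,m+1\}$, so collecting terms with equal powers of $\eta^2$ yields the representation $\sum_{l=0}^{m+1}B_{l,m+1}(\theta)\eta^{2l}$, where each coefficient is
\begin{align*}
B_{l,m+1}(\theta) = B_{l,m}(\theta)\otimes B_{0,1}(\theta) + B_{l-1,m}(\theta)\otimes B_{1,1}(\theta),
\end{align*}
with the convention that $B_{k,m}(\theta)=0$ whenever $k<0$ or $k>m$. Since $B_{k,m}(\theta)$ has dimension $(p+q)^m\times(p+q)^m$ by the inductive hypothesis and $B_{j,1}(\theta)$ has dimension $(p+q)\times(p+q)$, the Kronecker product has dimension $(p+q)^{m+1}\times(p+q)^{m+1}$, as required; the dependence on $\theta$ alone (and not on $\eta$) is preserved because the $B_{j,1}(\theta)$ are $\eta$-free.

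This argument is essentially mechanical, and I do not anticipate a genuine obstacle. The only point demanding a little care is the bookkeeping of the index collection: verifying that regrouping the double sum by the total degree $l=k+j$ produces precisely the recursion above and that the boundary terms (degrees $0$ and $m+1$) come out correctly under the zero-padding convention. The key structural input is simply that $A(\theta,\eta)$ is a first-degree polynomial in $\eta^2$ whose coefficients depend only on $\theta$, so that tensoring preserves both the polynomial-in-$\eta^2$ form and the $\theta$-only dependence of the coefficients.
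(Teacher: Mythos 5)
Your proof is correct and follows essentially the same route as the paper: induction on $m$ using the affine decomposition $A(\theta,\eta)=B_{0,1}(\theta)+B_{1,1}(\theta)\eta^2$ and bilinearity of the Kronecker product, with your zero-padded recursion $B_{l,m+1}(\theta)=B_{l,m}(\theta)\otimes B_{0,1}(\theta)+B_{l-1,m}(\theta)\otimes B_{1,1}(\theta)$ being exactly the paper's recursion with its boundary cases written compactly. No gaps.
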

Employing Proposition \ref{prop:7.1}, one finds $\EE\big[A_t^{\otimes m}\big]= \sum_{k=0}^m B_{k,m}(\theta_0) \mu_{2k}$ with $\mu_{0}=1$ and hence there exists a function $\tau:\Theta \times \R^{dim(\mu)} \to \R_+$ such that
\begin{align}
\label{eq:8787521}
T=\tau(\theta_0, \mu )= \big|\big|\EE[A(\theta_0,\eta_t)^{\otimes m}]\big|\big|_S.
\end{align}
With regard to Section \ref{sec:7.3}, a natural test statistic is given by
\begin{align}
\label{eq:983298321}
\hat{T}_n = \tau(\hat{\theta}_n,\hat{\mu}_n) = \bigg|\bigg|\frac{1}{n}\sum_{t=1}^n A(\hat{\theta}_n,\hat{\eta}_t)^{\otimes m}\bigg|\bigg|_S.
\end{align}
%
%To find critical values that control the size of the test, we study the ``worst case scenario'' under $\bar{H}_0$, which corresponds to $T$ being equal to $1$. With regard to \eqref{eq:767447}, one can rely on asymptotic theory when $p=q=1$.
For $p=q=1$ one can rely on asymptotic theory to find critical values that control the size of the test.
\begin{corollary}
\label{cor:7.1}
Suppose a GARCH(1,1) process $\{\epsilon_t\}$ with parameter $\theta_0$ and \iid~sequence $\{\eta_t\}$, which satisfies the assumptions of Theorem \ref{thm:7.2}. Then 
\begin{align}
\sqrt{n}(\hat{T}_n-T) \overset{d}{\to}N(0,\varsigma^2),
\end{align}
where $\varsigma^2 =  \frac{\partial \tau(\theta_0,\mu)}{\partial (\theta',\bar{\mu}')}\Sigma \frac{\partial \tau(\theta_0,\mu)}{\partial (\theta',\bar{\mu}')'}$.
\end{corollary}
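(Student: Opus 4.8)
The plan is to derive this as a straightforward application of the delta method to the joint central limit theorem established in Theorem \ref{thm:7.2}. The key observation is that for the GARCH($1,1$) case, the map $\tau$ admits the closed-form expression \eqref{eq:767447}, namely $\tau(\theta,\mu)=\sum_{k=0}^m\binom{m}{k}\alpha_1^k\beta_1^{m-k}\mu_{2k}$, which is a polynomial in the components of $\theta=(\omega,\alpha_1,\beta_1)'$ and $\mu=(\mu_2,\dots,\mu_{2m})'$. In particular $\tau$ is infinitely differentiable on a neighborhood of $(\theta_0,\mu)$, so the gradient $\frac{\partial\tau(\theta_0,\mu)}{\partial(\theta',\bar\mu')}$ exists and is continuous there. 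This is the crucial point: the spectral norm is generally only directionally differentiable at points where the maximal eigenvalue is not simple, but because $\EE[A_t^{\otimes m}]$ has a single non-zero eigenvalue in the GARCH($1,1$) case (as noted after \eqref{eq:767447}, citing \citeauthor{francq2011garch}), the spectral norm coincides locally with this smooth algebraic expression and the differentiability obstruction disappears.

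First I would record that Theorem \ref{thm:7.2} gives $\sqrt{n}\big((\hat\theta_n-\theta_0)',(\hat\mu_n-\mu)'\big)'\overset{d}{\to}N(0,\Sigma)$ under the stated assumptions. Next I would verify that $\hat T_n=\tau(\hat\theta_n,\hat\mu_n)$, i.e.\ that the sample statistic in \eqref{eq:983298321} equals $\tau$ evaluated at the estimators; this follows because $\frac{1}{n}\sum_t A(\hat\theta_n,\hat\eta_t)^{\otimes m}=\sum_{k=0}^m B_{k,m}(\hat\theta_n)\,\hat\mu_{2k,n}$ by Proposition \ref{prop:7.1} (with $\hat\mu_{0,n}=1$), so that $\hat T_n$ is exactly $\tau$ applied to the plug-in estimators, with the same functional form used to define $T=\tau(\theta_0,\mu)$ in \eqref{eq:8787521}. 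Then I would invoke the delta method: since $\tau$ is continuously differentiable at $(\theta_0,\mu)$, we obtain
\begin{align*}
\sqrt{n}(\hat T_n-T)=\frac{\partial\tau(\theta_0,\mu)}{\partial(\theta',\bar\mu')}\sqrt{n}\begin{pmatrix}\hat\theta_n-\theta_0\\\hat\mu_n-\mu\end{pmatrix}+o_p(1)\overset{d}{\to}N(0,\varsigma^2),
\end{align*}
with $\varsigma^2=\frac{\partial\tau(\theta_0,\mu)}{\partial(\theta',\bar\mu')}\Sigma\frac{\partial\tau(\theta_0,\mu)}{\partial(\theta',\bar\mu')'}$ by the usual sandwich formula for the variance of a linear transformation of a Gaussian vector.

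The main obstacle, and the step deserving the most care, is justifying the differentiability of $\tau$ at $(\theta_0,\mu)$, since $\tau$ is defined through a spectral norm rather than an explicit polynomial in the general setting. The resolution is to exploit the GARCH($1,1$) structure explicitly: here $A_t=(\eta_t^2,1)'(\alpha_{01},\beta_{01})$ is rank one, so $\EE[A_t^{\otimes m}]=\EE[(\eta_t^2,1)'^{\otimes m}](\alpha_{01},\beta_{01})^{\otimes m}$ is also rank one and thus has a single non-zero eigenvalue, which equals its spectral norm and is given by the smooth expression \eqref{eq:767447}. I would argue that on a neighborhood of $(\theta_0,\mu)$ this eigenvalue remains the unique dominant one and stays bounded away from the remaining (zero) eigenvalues, so that $\tau$ agrees locally with the real-analytic function $(\theta,\mu)\mapsto\sum_{k=0}^m\binom{m}{k}\alpha_1^k\beta_1^{m-k}\mu_{2k}$, whose partial derivatives can be written down explicitly. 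A secondary technical point is that the variance $\varsigma^2$ may degenerate if the gradient lies in the kernel of $\Sigma$; one should note that in the relevant testing region the gradient is non-zero (the coefficients $\binom{m}{k}\alpha_{01}^k\beta_{01}^{m-k}$ are non-negative and not all zero), so the limiting normal distribution is non-degenerate and the corollary provides a usable basis for size control.
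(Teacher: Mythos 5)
Your proposal is correct and takes essentially the same route as the paper, which disposes of this corollary in a single line as ``a direct consequence of Theorem \ref{thm:7.2} and the delta-method.'' The extra material you supply --- identifying $\hat{T}_n=\tau(\hat{\theta}_n,\hat{\mu}_n)$ via Proposition \ref{prop:7.1} and justifying smoothness of $\tau$ at $(\theta_0,\mu)$ through the rank-one structure of $\EE[A_t^{\otimes m}]$ and the closed form \eqref{eq:767447} --- is exactly the implicit content the paper leaves unstated, so it fills in rather than deviates from the paper's argument.
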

The previous corollary is a direct consequence of Theorem \ref{thm:7.2} and the delta-method. Hence, testing $\bar{H}_0$: $T\leq 1$ in the GARCH($1$,$1$) at the asymptotic level $\alpha \in (0,1)$ could be defined by the rejection region $\big\{\sqrt{n}(\hat{T}_n-1) >\hat{\varsigma}_n\Phi^{-1}(1-\alpha)\big\}$, where $\hat{\varsigma}_n$ is a consistent estimate for $\varsigma$ and $\Phi$ denotes the standard normal cumulative distribution function.
%Hence, a test of $\EE[\epsilon_t^{2m}]<\infty$  at the asymptotic level $\alpha \in (0,1)$ could be defined by the rejection region $\big\{\sqrt{n}(\hat{T}_n-1)/\hat{\varsigma}_n>\Phi^{-1}(1-\alpha)\big\}$, where $\hat{\varsigma}_n$ is a consistent estimate for $\varsigma$ and $\Phi$ denotes the standard normal cumulative distribution function. 
However, as shown in \cite{francq2018testing}, the finite sample distribution of $\hat{T}_n$ is not always in par with the asymptotic results. Moreover, for higher order GARCH models, this asymptotic approach is practically infeasible due to the complicated form of function $\tau$ (recall that $\tau$ is a composite function involving the spectral norm). Instead we propose to mimic the finite sample distribution of the test statistic by means of a bootstrap procedure similar to Section \ref{sec:7.4}. To construct such bootstrap scheme we re-estimate the parameter $\theta$ to impose the null hypothesis $\bar{H}_0$ for the ``bootstrap world''. We denote the constrained estimator by $\hat{\theta}_n^c$, which satisfies
\begin{align}
\label{eq:238748761}
\hat{\theta}_n^c=\arg\max_{\theta \in \Theta_n^c} \frac{1}{n}\sum_{t=1}^n \tilde{\ell}_t(\theta)  \qquad \text{with} \qquad  \Theta_n^c = \big\{\theta \in \Theta : \tau(\theta,\hat{\mu}_n)\leq 1\big\}.
\end{align}
%
% with
% %
% \begin{align}
% \label{eq:238748761}
%   \Theta_n^c = \big\{\theta \in \Theta : \tau(\theta,\hat{\mu}_n)\leq 1\big\}. 
% \end{align}
%
This estimator is strongly consistent for $\theta_0$  when $\tau(\theta_0,\mu)\leq 1$; for details we refer to Lemma \ref{lem:7.1} in the Appendix. Note that, by construction, the corresponding constrained test statistic  
\begin{align}
\hat{T}_n^c = \tau(\hat{\theta}_n^c,\hat{\mu}_n) = \bigg|\bigg|\frac{1}{n}\sum_{t=1}^n A(\hat{\theta}_n^c,\hat{\eta}_t)^{\otimes m}\bigg|\bigg|_S
\end{align}
satisfies $\hat{T}_n^c\leq 1$. Based on the constrained estimator $\hat{\theta}_n^c$ we propose a fixed-design residual bootstrap algorithm to mimic the distribution of the test statistic $\hat{T}_n$. 

\begin{algorithm}\textit{(Fixed-design residual bootstrap)}
\label{alg:5.2}
\begin{enumerate}

\item For $t=1,\dots, n$, generate  $\eta_t^\star \overset{iid}{\sim} \hat{\mathbbm{F}}_n$ and the bootstrap observation $\epsilon_t^\star = \tilde{\sigma}_t(\hat{\theta}_n^c) \eta_t^\star$.

\item Calculate the bootstrap estimator 
\begin{align}
\label{eq:7.5.9}
\hat{\theta}_n^\star = \arg \max_{\theta \in \Theta}\frac{1}{n}\sum_{t=1}^n \ell_t^\star(\theta) \quad \text{with} \quad \ell_t^\star(\theta)=-\frac{1}{2}\bigg(\frac{\epsilon_t^{\star}}{\tilde{\sigma}_t(\theta)}\bigg)^2-\log \tilde{\sigma}_t(\theta).
\end{align}
\item For $t=1,\dots,n$ compute the bootstrap residual $\hat{\eta}_t^\star = \epsilon_t^\star/\tilde{\sigma}_t(\hat{\theta}_n^\star)$
and obtain the bootstrap test statistic
\begin{align}
\label{eq:7.5.10}
\hat{T}_n^\star = \tau(\hat{\theta}_n^\star,\hat{\mu}_n^\star) =  \bigg|\bigg|\frac{1}{n}\sum_{t=1}^n \big(A(\hat{\theta}_n^\star,\hat{\eta}_t^\star)\big)^{\otimes m}\bigg|\bigg|_S
\end{align}
with $\hat{\mu}_{n}^\star = \frac{1}{n}\sum_{t=1}^n h\big(\hat{\eta}_t^\star\big)$.
\end{enumerate}
\end{algorithm}
Since the bootstrap quantities are generated under the constrained estimator, a superscript $^\star$ is employed to distinguish them from the ones in Algorithm \ref{alg:5.1}. The corresponding bootstrap notation is given by:  “$\overset{p^\star}{\to}$", “$\overset{d^\star}{\to}$", “$O_{p^\star}(1)$", “$o_{p^\star}(1)$", $\PP^\star$ and $\EE^\star$. The bootstrap procedure described in  Algorithm \ref{alg:5.2} is valid in the following sense.
%
% \begin{corollary}
% \label{cor:7.2}
% Suppose the assumptions of Theorem \ref{thm:7.3} hold true. If $T=1$, we have
% %
% \begin{align}
% \label{eq:5850320}
% \sup_x\Big|\PP^\star\big[\sqrt{n}(\hat{T}_n^\star-1)\leq x\big]-\PP\big[\sqrt{n}(\hat{T}_n-1)\leq x\big]\Big|\overset{p}{\to}0.
% \end{align}
% Under the alternative, i.e.\ $T>1$, we have $\sqrt{n}(\hat{T}_n^\star-1)=O_{p^\star}(1)$ in probability.
% \end{corollary}
\begin{corollary}
\label{cor:7.2}
Suppose the assumptions of Theorem \ref{thm:7.3} hold true. Under the null hypothesis $\bar{H}_0$: $T \leq 1$ we have
\begin{align}
\label{eq:5850320}
\sup_x\Big|\PP^\star\big[\sqrt{n}(\hat{T}_n^\star-\hat{T}_n^c)\leq x\big]-\PP\big[\sqrt{n}(\hat{T}_n-T)\leq x\big]\Big|\overset{p}{\to}0.
\end{align}
Under the alternative $\bar{H}_1$: $T > 1$ we have $\sqrt{n}(\hat{T}_n^\star-\hat{T}_n^c)=O_{p^\star}(1)$ in probability.
\end{corollary}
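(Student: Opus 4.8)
The plan is to establish the null result via the delta method applied to the bootstrap distribution, and then treat the alternative by a boundedness argument. Under $\bar{H}_0$, Theorem~\ref{thm:7.3} together with Remark~\ref{rem:7.1} (applied with the consistent constrained estimator $\hat{\theta}_n^c$ in place of $\hat{\theta}_n$) gives that the bootstrap vector $\sqrt{n}\big((\hat{\theta}_n^\star-\hat{\theta}_n^c)',(\hat{\mu}_n^\star-\hat{\mu}_n^\star{}^{(0)})'\big)'$ converges $d^\star$ to $N(0,\Sigma)$ in probability, where the centering is the appropriate constrained analogue. First I would write $\hat{T}_n^\star=\tau(\hat{\theta}_n^\star,\hat{\mu}_n^\star)$ and $\hat{T}_n^c=\tau(\hat{\theta}_n^c,\hat{\mu}_n^c)$ and perform a first-order Taylor expansion of $\tau$ around the constrained point, so that $\sqrt{n}(\hat{T}_n^\star-\hat{T}_n^c)$ equals the gradient $\frac{\partial\tau}{\partial(\theta',\bar\mu')}$ contracted with the bootstrap score vector, plus a remainder that is $o_{p^\star}(1)$ in probability. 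Because $\hat{\theta}_n^c\overset{p}{\to}\theta_0$ and $\hat{\mu}_n\overset{p}{\to}\mu$ by Lemma~\ref{lem:7.1} and the consistency of the residual moments, the gradient evaluated at the constrained estimates converges in probability to $\frac{\partial\tau(\theta_0,\mu)}{\partial(\theta',\bar\mu')}$, so by a bootstrap delta method the limiting variance matches $\varsigma^2$ from Corollary~\ref{cor:7.1}. An identical expansion of the non-bootstrap statistic $\sqrt{n}(\hat{T}_n-T)$ shows its limit is the same $N(0,\varsigma^2)$; since the Gaussian limit has a continuous distribution function, the supremum distance in \eqref{eq:5850320} converges to zero by P\'olya's theorem.

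The one delicate point in the null case is differentiability of $\tau$ at $(\theta_0,\mu)$: recall $\tau$ involves the spectral norm $\|\cdot\|_S$ of $\EE[A(\theta,\eta)^{\otimes m}]=\sum_k B_{k,m}(\theta)\mu_{2k}$, and the map $M\mapsto\|M\|_S=\sqrt{\lambda_{\max}(M'M)}$ is differentiable only where the largest singular value is simple. I would invoke Proposition~\ref{prop:7.1} to write the inner matrix as an explicit polynomial in $\theta$ and $\mu$, argue (as in the GARCH$(1,1)$ continuation, where the eigenvalue is given explicitly by \eqref{eq:767447}) that the Perron-type leading eigenvalue of the nonnegative matrix $\EE[A_t^{\otimes m}]$ is simple, and hence that $\tau$ is continuously differentiable in a neighborhood of $(\theta_0,\mu)$. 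This simplicity is what makes both the ordinary delta method in Corollary~\ref{cor:7.1} and its bootstrap analogue legitimate, and it is the step I expect to require the most care.

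For the alternative $\bar{H}_1$: $T>1$, the bootstrap data are still generated under the constrained estimator, which by construction satisfies $\tau(\hat{\theta}_n^c,\hat{\mu}_n)\leq 1$, so the bootstrap DGP continues to lie on the boundary of the null region even though the true $T$ exceeds one. The plan here is simply to note that $\hat{\theta}_n^c$ remains bounded in the compact set $\Theta$ and converges to some boundary point $\theta^c$ with $\tau(\theta^c,\mu)=1$, so Theorem~\ref{thm:7.3} (via Remark~\ref{rem:7.1}) still delivers $\sqrt{n}(\hat{\theta}_n^\star-\hat{\theta}_n^c)=O_{p^\star}(1)$ and $\sqrt{n}(\hat{\mu}_n^\star-\hat{\mu}_n^{\star(0)})=O_{p^\star}(1)$ in probability. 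Applying the same Taylor expansion of $\tau$ and using continuity of the gradient at the limiting boundary point, I obtain $\sqrt{n}(\hat{T}_n^\star-\hat{T}_n^c)=O_{p^\star}(1)$ in probability, which is exactly the stated claim. The role of this half of the corollary is to guarantee consistency of the eventual test: while the bootstrap statistic stays stochastically bounded under $\bar{H}_1$, the actual statistic $\sqrt{n}(\hat{T}_n-1)$ diverges to $+\infty$ because $\hat{T}_n\overset{p}{\to}T>1$, forcing rejection with probability approaching one.
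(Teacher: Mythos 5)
Your proposal reproduces the outer skeleton of the paper's argument (consistency of $\hat{\theta}_n^c$, a bootstrap CLT, a delta-method step, P\'olya's lemma), but it hinges on a step that the paper is specifically constructed to avoid, and you have not closed it. You apply the delta method directly to the composite map $\tau$, which requires differentiability of $M \mapsto \|M\|_S=\sqrt{\lambda_{\max}(M'M)}$ at the relevant point, i.e.\ simplicity of the largest eigenvalue of $M'M$. You flag this as the delicate step, but your proposed fix via Perron--Frobenius does not work as stated: simplicity of the leading eigenvalue of a nonnegative matrix requires irreducibility (indeed primitivity), which the Kronecker power $\EE[A_t^{\otimes m}]$ of the GARCH companion matrix need not possess --- it is riddled with structural zero blocks, and already in the GARCH($1$,$1$) case it is a rank-one matrix; moreover the spectral norm involves $M'M$, so the argument would have to be made for $M'M$, not for $\EE[A_t^{\otimes m}]$ itself. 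The paper sidesteps simplicity entirely: the delta method is applied only at the matrix level, to the entrywise-polynomial map $(\theta,\mu)\mapsto vech\big(M(\theta,\mu)\big)$ with $M=C'C$ and $C(\theta,\mu)=\sum_{k=0}^m B_{k,m}(\theta)\mu_{2k}$ (Proposition \ref{prop:7.1}), yielding $\sqrt{n}(\hat{M}_n-M)\overset{d}{\to}G$; the passage to the spectral norm is then handled by Kato--Watson eigenvalue perturbation through the eigenprojector, giving $\sqrt{n}(\hat{T}_n-T)\overset{d}{\to}\frac{1}{2T}\,tr\{GP_{\max}\}/tr\{P_{\max}\}$, and on the bootstrap side $\hat{T}_n^{c\,2}\overset{a.s.}{\to}T^2$ and $\hat{P}_{n,\max}^c\overset{a.s.}{\to}P_{\max}$ deliver the matching limit before P\'olya's lemma is invoked. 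Without a genuine simplicity proof, your null-hypothesis argument has a real gap precisely at its crux.

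Under the alternative there is a second gap: you invoke Theorem \ref{thm:7.3} via Remark \ref{rem:7.1}, but the remark only licenses replacing $\hat{\theta}_n$ by a \emph{consistent estimator of $\theta_0$}, whereas under $\bar{H}_1$ the constrained estimator $\hat{\theta}_n^c$ converges to a pseudo-true value $\theta^c\neq\theta_0$, so the theorem cannot be cited as-is. The bootstrap asymptotics must be re-derived around $\theta^c$ --- the bootstrap data form a genuine GARCH with parameter $\hat{\theta}_n^c\to\theta^c$ --- which is what the paper does: $\hat{M}_n^c\to M^c=M(\theta^c,\mu)$, $\sqrt{n}(\hat{M}_n^\star-\hat{M}_n^c)\overset{d^\star}{\to}G^c$ in probability, and the projector argument at $M^c$ gives $\sqrt{n}(\hat{T}_n^\star-\hat{T}_n^c)\overset{d^\star}{\to}\frac{1}{2T^c}\,tr\{G^cP_{\max}^c\}/tr\{P_{\max}^c\}$, hence $O_{p^\star}(1)$. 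Your additional claim that $\tau(\theta^c,\mu)=1$ is neither proven nor needed; the paper leaves $T^c=\tau(\theta^c,\mu)$ unrestricted. A minor slip in the same direction: the centering is $\hat{T}_n^c=\tau(\hat{\theta}_n^c,\hat{\mu}_n)$ with the \emph{unconstrained} residual moments $\hat{\mu}_n$, not $\tau(\hat{\theta}_n^c,\hat{\mu}_n^c)$.
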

%
%The proof is based on Theorem \ref{thm:7.3} (see also Remark \ref{rem:7.1}) and arguments given in \cite{watson1983statistics} and \cite{kato2013perturbation}.
The previous corollary legitimatizes the following bootstrap test to assess whether $\EE[\epsilon_t^{2m}]$ is finite in the GARCH($p$,$q$) model. We acquire a set of $B$ bootstrap replicates, i.e.\ $\hat{T}_n^{\star (b)}$ for $b=1,\dots, B$, by repeating Algorithm \ref{alg:5.2}  and compute 
\begin{align}
\label{eq:21872}
    \hat{p}_{n,B}^\star = \frac{1}{B}\sum_{b=1}^B\mathbbm{1}{\big\{\hat{T}_n -1\leq \hat{T}_n^{\star (b)}-\hat{T}_n^c\big\}},
\end{align}
 which proxies the p-value of the null hypothesis $\bar{H}_0: T\leq 1$.
Thus, one rejects the null hypothesis when \eqref{eq:21872} is below the nominal level of the test (e.g.\ $5\%$ or $10\%$). To appreciate why the bootstrap test is consistent, we note that under the alternative $\bar{H}_1$: $T > 1$ we have $\sqrt{n}(\hat{T}_n^\star-\hat{T}_n^c)=O_{p^\star}(1)$ whereas
\begin{align}
\sqrt{n}(\hat{T}_n-1) = \underbrace{\sqrt{n}(\hat{T}_n-T)}_{=O_p(1)}+ \underbrace{\sqrt{n}(T-1)}_{\to \infty}
\end{align}
diverges in probability.
%
% \begin{remark}
% For a given test level $\alpha \in (0,1)$ the bootstrap test is defined by the rejection region $\big\{\sqrt{n}(\hat{T}_n-1)>G_{n,B}^{\star\:-1}(1-\alpha)\big\}$, where  $G_{n,B}^{\star}(x)= \frac{1}{B}\sum_{b=1}^B\mathbbm{1}{\big\{\sqrt{n}(\hat{T}_n^{\star (b)}-\hat{T}_n^c)\leq x\big\}}$ approximates $G_{n}^{\star}(x) = \PP^\star\big[\sqrt{n}(\hat{T}_n^\star-\hat{T}_n^c)\leq x\big]$ for large $B$.
% \end{remark}

\begin{remark}
\label{rem:7.2}
In case one is interested in the null hypothesis $\tilde{H}_0:T\geq 1$ against the alternative hypothesis $\tilde{H}_1:T< 1$, the outlined bootstrap testing procedure can be readily adapted: replace ``$\leq$"  in equations \eqref{eq:238748761} and \eqref{eq:21872} by ``$\geq$".
\end{remark}

\afterpage{
\begin{landscape}

\begin{table}[]
\centering
\resizebox{21cm}{!}{
\begin{tabular}{ccccc}
          & \textbf{Moment}                       & \textbf{Volatility Recursion}                                                                                                                                                                                                                                            & $\bm{A(\theta,\eta)}$                                                                                                                                                                                                                                                                                                                                                                                                                                                                                                                                                                                                                                                                                                             & $\bm{h(x)}$                                                                                                                                                                                                     \\ \hline \hline
          &                              &                                                                                                                                                                                                                                                  &                                                                                                                                                                                                                                                                                                                                                                                                                                                                                                                                                                                                                                                                                                                              &                                                                                                                                                                                                            \\
ARCH      & $\EE[\epsilon_t^{2m}]$       & $\sigma_{t}^2 = \omega_0+ \sum\limits_{i=1}^q\alpha_{0i} \epsilon_{t-i}^2$                                                                                                                               
& \begin{tabular}[c]{@{}c@{}}$\left(\begin{array}{@{}ccc@{}}     \alpha_1 \eta^2 & \dots & \alpha_q \eta^2  \\      & I_{(q-1)\times(q-1)} & O_{(q-1)\times 1}\\   \end{array}\right)$\end{tabular}                                                                                                                                                                                                                                                                                                                                                                                                                                                                                                                        & \begin{tabular}[c]{@{}c@{}}$\left(\begin{array}{@{}c@{}}     x^2  \\ \vdots \\      x^{2m}\\   \end{array}\right)$\end{tabular}                                                                      \\
          &                              & \multicolumn{1}{l}{}                                                                                                                                                                                                                             & \multicolumn{1}{l}{}                                                                                                                                                                                                                                                                                                                                                                                                                                                                                                                                                                                                                                                                                                         & \multicolumn{1}{l}{}                                                                                                                                                                                       \\
GARCH  & $\EE[\epsilon_t^{2 m}]$ & \begin{tabular}[c]{@{}c@{}}$\sigma_{t}^2 = \omega_0 + \sum\limits_{i=1}^q\alpha_{0i}\epsilon_{t-i}^2$ \\ $\textcolor{white}{aaaaaaa} + \sum\limits_{j=1}^p\beta_{0j} \sigma_{t-j}^2 $\end{tabular}     & \begin{tabular}[c]{@{}c@{}}$\left(\begin{array}{@{}ccc|ccc@{}}     \alpha_1 \eta^2 & \dots & \alpha_q \eta^2 & \beta_1 \eta^2 & \dots & \beta_p \eta^2 \\      & I_{(q-1)\times(q-1)} & O_{(q-1)\times 1} &  & O_{(q-1)\times p} & \\\hline    \alpha_1  & \dots & \alpha_q  & \beta_1  & \dots & \beta_p  \\      & O_{(p-1)\times q} &  &  & I_{(p-1)\times(p-1)} & O_{(p-1)\times 1}\\   \end{array}\right)$\end{tabular}                                                                                                                                                                                                    & \begin{tabular}[c]{@{}c@{}}$\left(\begin{array}{@{}c@{}}     x^2  \\ \vdots \\      x^{2m}\\   \end{array}\right)$\end{tabular}  \\
          &                              & \multicolumn{1}{l}{}                                                                                                                                                                                                                             & \multicolumn{1}{l}{}                                                                                                                                                                                                                                                                                                                                                                                                                                                                                                                                                                                                                                                                                                         & \multicolumn{1}{l}{}                                                                                                                                                                                       \\          
T-GARCH   & $\EE[\epsilon_t^{m}]$        & \begin{tabular}[c]{@{}c@{}}$\sigma_{t} = \omega_0 + \sum\limits_{i=1}^q\big(\alpha_{0i}^+ \epsilon_{t-i}^+ + \alpha_{0i}^- \epsilon_{t-i}^-\big) ${}\\ $+ \sum\limits_{j=1}^p\beta_{0j} \sigma_{t-j}\textcolor{white}{aii}$\end{tabular}             & \begin{tabular}[c]{@{}c@{}}$\left(\begin{array}{@{}ccccc|ccc@{}}     \alpha_1^+ \eta^+ & \alpha_1^- \eta^+ & \dots & \alpha_q^+ \eta^+ & \alpha_q^- \eta^+ & \beta_1 \eta^+ & \dots & \beta_p \eta^+ \\         \alpha_1^+ \eta^- & \alpha_1^- \eta^- & \dots & \alpha_q^+ \eta^- & \alpha_q^- \eta^- & \beta_1 \eta^- & \dots & \beta_p \eta^- \\      & I_{(2q-2)\times(2q-2)} & & O_{(2q-2)\times 2} & & & O_{(2q-2)\times p} & \\\hline     \alpha_1^+  & \alpha_1^-  & \dots & \alpha_q^+  & \alpha_q^-  & \beta_1  & \dots & \beta_p  \\      & & O_{(p-1)\times 2q} &  &  & & I_{(p-1)\times(p-1)} & O_{(p-1)\times 1}\\   \end{array}\right)$\end{tabular}                                                 & \begin{tabular}[c]{@{}c@{}}$\left(\begin{array}{@{}c@{}} x^+ \\ \vdots \\ (x^+)^m \\ x^- \\ \vdots \\ (x^-)^m\\   \end{array}\right)$\end{tabular}                                    \\
          &                              & \multicolumn{1}{l}{}                                                                                                                                                                                                                             & \multicolumn{1}{l}{}                                                                                                                                                                                                                                                                                                                                                                                                                                                                                                                                                                                                                                                                                                         & \multicolumn{1}{l}{}                                                                                                                                                                                       \\
          AP-GARCH  & $\EE[\epsilon_t^{\delta m}]$ & \begin{tabular}[c]{@{}c@{}}$\sigma_{t}^\delta = \omega_0 + \sum\limits_{i=1}^q\alpha_{0i}\big(|\epsilon_{t-i}| - \gamma \epsilon_{t-i}\big)^\delta$ \\ $ + \sum\limits_{j=1}^p\beta_{0j} \sigma_{t-j}^\delta \textcolor{white}{aai}$\end{tabular}     & \begin{tabular}[c]{@{}c@{}}$\left(\begin{array}{@{}ccc|ccc@{}}     \alpha_1 (|\eta|-\gamma \eta)^\delta & \dots & \alpha_q (|\eta|-\gamma \eta)^\delta & \beta_1 (|\eta|-\gamma \eta)^\delta & \dots & \beta_p (|\eta|-\gamma \eta)^\delta \\      & I_{(q-1)\times(q-1)} & O_{(q-1)\times 1} &  & O_{(q-1)\times p} & \\\hline    \alpha_1  & \dots & \alpha_q  & \beta_1  & \dots & \beta_p  \\      & O_{(p-1)\times q} &  &  & I_{(p-1)\times(p-1)} & O_{(p-1)\times 1}\\   \end{array}\right)$\end{tabular}                                                                                                                                                                                                    & \begin{tabular}[c]{@{}c@{}}$\left(\begin{array}{@{}c@{}} (x^+)^\delta \\ \vdots \\ (x^+)^{\delta m}\\ (x^-)^\delta \\ \vdots \\ (x^-)^{\delta m}\\   \end{array}\right)$\end{tabular} \\
          &                              & \multicolumn{1}{l}{}                                                                                                                                                                                                                             & \multicolumn{1}{l}{}                                                                                                                                                                                                                                                                                                                                                                                                                                                                                                                                                                                                                                                                                                         & \multicolumn{1}{l}{}                                                                                                                                                                                       \\
GJR-GARCH & $\EE[\epsilon_t^{2m}]$       & \begin{tabular}[c]{@{}c@{}}$\sigma_{t}^2 = \omega_0 + \sum\limits_{i=1}^q\big(\alpha_{0i}^+ (\epsilon_{t-i}^+)^2 + \alpha_{0i}^- (\epsilon_{t-i}^-)^2\big)${}\\ $ + \sum\limits_{j=1}^p\beta_{0j} \sigma_{t-j}^2\textcolor{white}{aaaaaai}$\end{tabular} & \begin{tabular}[c]{@{}c@{}}$\left(\begin{array}{@{}ccccc|ccc@{}}     \alpha_1^+ (\eta^+)^2 & \alpha_1^- (\eta^+)^2 & \dots & \alpha_q^+ (\eta^+)^2 & \alpha_q^- (\eta^+)^2 & \beta_1 (\eta^+)^2 & \dots & \beta_p (\eta^+)^2 \\         \alpha_1^+ (\eta^-)^2 & \alpha_1^- (\eta^-)^2 & \dots & \alpha_q^+ (\eta^-)^2 & \alpha_q^- (\eta^-)^2 & \beta_1 (\eta^-)^2 & \dots & \beta_p (\eta^-)^2 \\      & I_{(2q-2)\times(2q-2)} & & O_{(2q-2)\times 2} & & & O_{(2q-2)\times p} & \\\hline    \alpha_1^+  & \alpha_1^-  & \dots & \alpha_q^+  & \alpha_q^-  & \beta_1  & \dots & \beta_p  \\      & & O_{(p-1)\times 2q} &  &  & & I_{(p-1)\times(p-1)} & O_{(p-1)\times 1}\\   \end{array}\right)$\end{tabular} & \begin{tabular}[c]{@{}c@{}}$\left(\begin{array}{@{}c@{}}  (x^+)^2 \\ \vdots \\ (x^+)^{2m} \\ (x^-)^2 \\ \vdots \\ (x^-)^{2m}\\   \end{array}\right)$\end{tabular}                      \\
          &                              & \multicolumn{1}{l}{}                                                                                                                                                                                                                             & \multicolumn{1}{l}{}                                                                                                                                                                                                                                                                                                                                                                                                                                                                                                                                                                                                                                                                                                         & \multicolumn{1}{l}{}                                                                                                                                                                                       \\ \hline \hline 
\end{tabular}}
\caption{Examples of GARCH-type models for which the proposed bootstrap-based test can be applied.}
\label{tab:7.1}
\end{table}

\end{landscape}
}

\begin{remark}
\label{rem:7.3}
The bootstrap-based test for the existence of moments can be readily adapted to other GARCH-type processes such as the threshold GARCH (T-GARCH) of \cite{zakoian1994threshold},  the asymmetric power GARCH (AP-GARCH) of \cite{ding1993long} or the GARCH extension of \cite{glosten1993relation} (GJR-GARCH). In fact, it only requires replacing the model-specific function $A(\theta,\eta)$ in \eqref{eq:689954}; see Table \ref{tab:7.1} for details on the functional forms of the aforementioned GARCH-type models. The theoretical results presented for the GARCH carry over after a small adjustment of the moment function $h(x)$: e.g.\ in the T-GARCH case the corresponding function is given by $h(x)=\big(x^+,\dots,(x^+)^m,x^-,\dots,(x^-)^m\big)'$, where $x^+=\max(x,0)$ and $x^-=\max(-x,0)$.\footnote{Although $h(x)$ is not differentiable at $x=0$ in the T-GARCH case, it is worth mentioning that $\nu$ in Theorem \ref{thm:7.2} is well defined as $h$ is differentiable almost everywhere.}
%Up to this point we have purely focused on the GARCH processes. It turns out that the bootstrap-based test for the existence of moments can be readily adapted to other GARCH-type processes such as the threshold GARCH (T-GARCH) of \cite{zakoian1994threshold},  the asymmetric power GARCH (AP-GARCH) of \cite{ding1993long} or the GARCH extension of \cite{glosten1993relation} (GJR-GARCH). In fact, it only requires replacing the model-specific function $A(\theta,\eta)$ in \eqref{eq:689954}. Table \ref{tab:7.2} represents the corresponding functional forms of the aforementioned GARCH-type models. The theoretical results presented carry over after a small adjustment of the moment function $h(x)$ tabulated in Table \ref{tab:7.2}. For example, in the T-GARCH case the corresponding function is given by $h(x)=\big(x^+,\dots,(x^+)^m,x^-,\dots,(x^-)^m\big)'$, where $x^+=\max(x,0)$ and $x^-=\max(-x,0)$. Although $h(x)$ is not differentiable at $x=0$ in that case, it is worth mentioning that $\nu$ in Theorem \ref{thm:7.2} is well defined as $h$ is differentiable almost everywhere.
\end{remark}

%\begin{remark}
%\label{rem:7.3}
%The bootstrap-based test for the existence of moments can be readily adapted to other GARCH-type processes. In fact, it only requires replacing the model-specific function $A(\theta,\eta)$ in \eqref{eq:689954}. For examples see Table \ref{tab:7.2} for the functional form in the threshold GARCH (T-GARCH) model. The theoretical results presented carry over after a small adjustment of the moment function $h(x)$.\footnote{Note that the functions tabulated in Table \ref{tab:7.2} are differentiable almost everywhere such that $\nu$ in Theorem \ref{thm:7.2} is well defined.}
%\end{remark}

\begin{comment}
\begin{example}
\label{ex:7.2}
Suppose $\{\epsilon_t\}$ follows a T-GARCH$(1,1)$ process given by \eqref{eq:7.2.1} and $\sigma_{t+1} = \omega_0 + \alpha_0^+ \epsilon_t^+ + \alpha_0^- \epsilon_t^- + \beta_0 \sigma_{t}$, where $\theta_0 = (\omega_0,\alpha_0^+,\alpha_0^-,\beta_0)'\in (0,\infty)\times[0,\infty)^2\times[0,1)$. The necessary and sufficient condition for the existence of $\EE[|\epsilon_t|^d]$ is 
%
\begin{align}
\label{eq:87879759975}
\sum_{i=0}^d \binom{d}{i} \Big(\big(\alpha_0^+\big)^i \mu_i^+ + \big(\alpha_0^-\big)^i \mu_i^-\Big) \beta_{0}^{m-i} \textcolor{red}{\mu_{2i}}<1
\end{align}
\end{example}
\end{comment}

\section{Numerical Illustration}
\label{sec:7.6}

\subsection{Monte Carlo Experiment}
\label{sec:7.6.1}

A simulation study is conducted to gain further insights into the practical implications of the bootstrap-based test of Section \ref{sec:7.5}. In particular we focus on the GARCH($1$,$2$) model, which is motivated by the subsequent empirical application (see Section \ref{sec:7.6.2}). The innovations are generated from a standard normal distribution, i.e.\ $\eta_t \overset{iid}{\sim}N(0,1)$, such that $(\mu_4,\mu_6,\mu_8,\mu_{10})=(3,15,105,945)$. Further, the GARCH parameters are set to $\omega_0 = 0.08$, $\alpha_{01} = 0.05$ and $\alpha_{02} = 0.10$ while $\beta_{01} \approx 0.80$ is chosen such that $T$ in \eqref{eq:8787521} is equal to unity when $m=3$. In other words, $m=3$ corresponds to the boundary case of the null hypothesis in which  $\EE[\epsilon_t^{2m}]$ is just evaluated infinite.
%In other words, $\EE[\epsilon_t^{2m}]$ exists for $m=1,2$ for the GARCH($1$,$2$) process, but does not for $m=4,5,\dots$. The case $m=3$ corresponds to the boundary case of the null hypothesis in which  $\EE[\epsilon_t^{2m}]$ is just evaluated infinite.
We consider three estimation sample sizes, $n \in \{1{,}000; 5{,}000; 10{,}000\}$, whereas the number of bootstrap replicates is fixed and equal to $B=1{,}999$. For each model version we simulate $S=2{,}000$ independent Monte Carlo trajectories and investigate the proposed bootstrap test at two nominal levels: $5\%$ and $10\%$.

\begin{figure}[ht!]
%\captionsetup[subfigure]{labelformat=roman}
\centering
\begin{subfigure}[b]{0.45\textwidth}
\label{fig:1a}
	\centering
	\includegraphics[width=\textwidth]{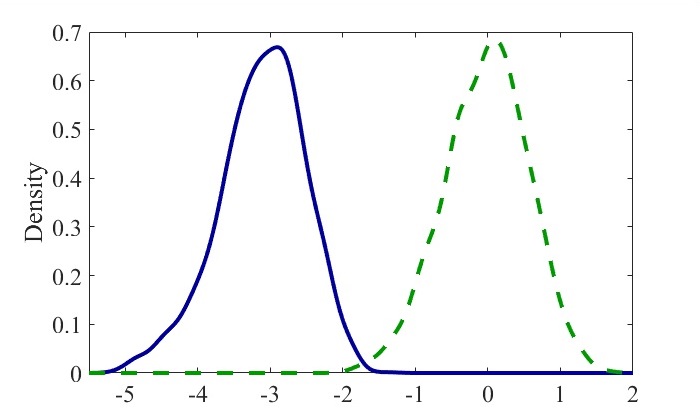}
            \caption{$m=1$ with $T=0.96$}    
\end{subfigure}
\quad
\begin{subfigure}[b]{0.45\textwidth}  
\label{fig:1b}
	\centering 
	\includegraphics[width=\textwidth]{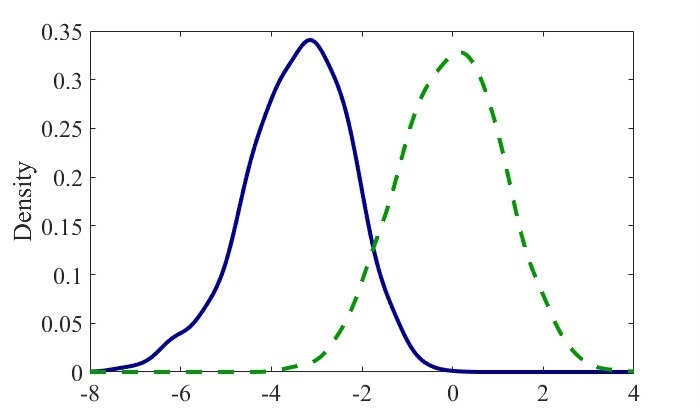}
	\caption{$m=2$ with $T=0.95$} 
\end{subfigure}
\begin{subfigure}[b]{0.45\textwidth}
\label{fig:1c}
	\centering
	\includegraphics[width=\textwidth]{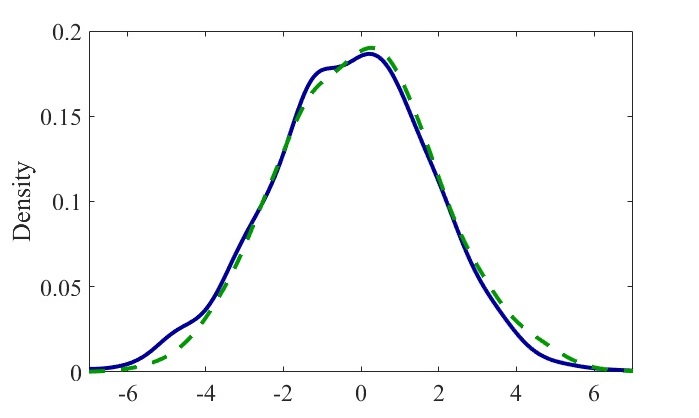}
            \caption{$m=3$ with $T=1.00$}    
\end{subfigure}
\quad
\begin{subfigure}[b]{0.45\textwidth}  
\label{fig:1d}
	\centering 
	\includegraphics[width=\textwidth]{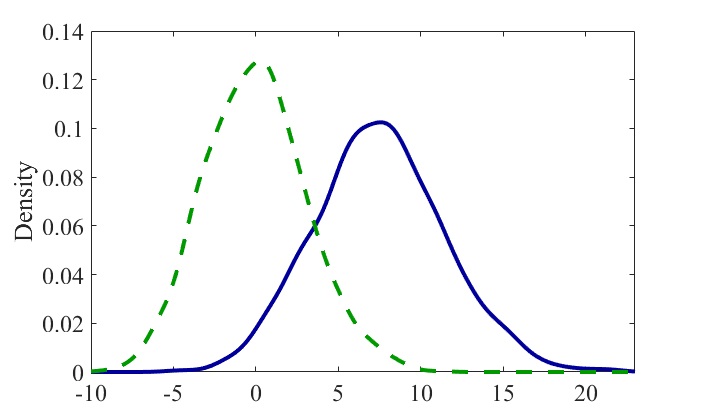}
	\caption{$m=4$ with $T=1.11$} 
\end{subfigure}
\begin{subfigure}[b]{0.45\textwidth}
\label{fig:1e}
	\centering
	\includegraphics[width=\textwidth]{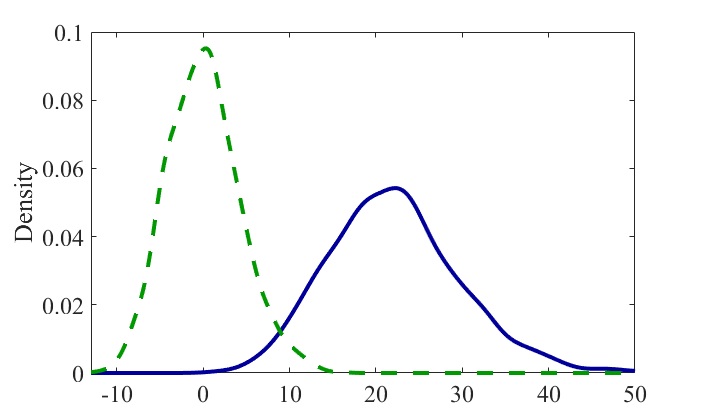}
            \caption{$m=5$ with $T=1.32$}    
\end{subfigure}
\caption{Density estimates for the distribution of $\sqrt{n}(\hat{T}_n-1)$ (solid blue line) based on $S=2{,}000$ simulations and the bootstrap distribution of $\sqrt{n}(\hat{T}_n^\star-\hat{T}_n^c)$ (dashed green line) based on $B=1{,}999$. The data generating process is a GARCH($1$,$2$) with Gaussian innovations and sample size $n=5{,}000$.} 
\label{blubb}
\end{figure}

Figure \ref{blubb} displays the density of the distribution of $\sqrt{n}(\hat{T}_n-1)$ and the bootstrap distribution of $\sqrt{n}(\hat{T}_n^\star-\hat{T}_n^c)$ for varying $m$ and sample size $n=5{,}000$. For $m=1$ and $m=2$ one observes that the two densities have a similar shape. The key difference is that the bootstrap distribution is centered around zero, whereas the distribution of $\sqrt{n}(\hat{T}_n-1)$ is shifted to the left (as expected) with center $\sqrt{n}(T-1)$, i.e.\ $-3.00$ for $m=1$ and $-3.22$ for $m=2$. For the case $m=3$, which corresponds to the boundary of the null hypothesis, Figure \ref{blubb}(iii) shows that the bootstrap distribution of $\sqrt{n}(\hat{T}_n^\star-\hat{T}_n^c)$ mimics well the finite sample distribution of $\sqrt{n}(\hat{T}_n-1)$. For $m=4$ and $m=5$, the null hypothesis is violated and the bootstrap and the non-bootstrap distribution exhibit distinct behavior as visualized in Figures \ref{blubb}(d) and \ref{blubb}(e). Whereas the bootstrap distribution remains centered around the origin, the  distribution of $\sqrt{n}(\hat{T}_n-1)$ is more disperse and starts to diverge with center $\sqrt{n}(T-1)$, i.e.\ $7.83$ for $m=4$ and $22.40$ for $m=5$.

%The simulations are carried out in Matlab R2016a on $16$ cores of a HR Z640 workstation employing parallel computing.
Table \ref{tab:7.2} reports the simulated rejection rates (in $\%$). For $m=1,2$ the null hypothesis of $\EE[\epsilon_t^{2m}]<\infty$ is (almost) never rejected by the bootstrap test at the considered nominal values across  sample sizes. For $m=3$, the relative rejection frequencies are below the corresponding nominal values, yet approach them with increasing sample size. This result suggests that the bootstrap test is rather conservative. For $m=4$ the the relative rejection frequency considerably increase (especially in larger samples) indicating that the null hypothesis is violated. For $m=5$ the results are more pronounced and the relative rejection rates are considerably higher reaching $100\%$ when the sample size is $n=10{,}000$. 

\begin{table}[bh]
\centering
\begin{tabular}{rrccccc}
\hline \hline
\multicolumn{1}{c}{Sample} & \multicolumn{1}{c}{Nominal} & $m=1$    & $m=2$    & $m=3$  & $m=4$    & $m=5$    \\
\multicolumn{1}{c}{size}   & \multicolumn{1}{c}{level}   & $T=0.96$ & $T=0.95$ & $T=1.00$  & $T=1.11$ & $T=1.32$ \\ \hline
\multicolumn{1}{l}{}       & \multicolumn{1}{c}{}        &          &          &        &          &          \\
$1{,}000$                  
                           & $5\%$                       &  $0.00$  &  $0.00$  & $2.70$ &  $18.75$ & $43.20$  \\
                           & $10\%$                      &  $0.00$  &  $0.05$  & $6.40$ & $29.35$  & $58.05$  \\
$5{,}000$                  
                           & $5\%$                       &  $0.00$  &  $0.00$  & $3.10$ & $66.80$  &  $97.95$ \\
                           & $10\%$                      &  $0.00$  &  $0.00$  & $6.50$ &  $79.25$ & $99.00$  \\
$10{,}000$                 
                           & $5\%$                       &  $0.00$  &  $0.00$  & $4.15$ & $91.15$  & $99.95$  \\
                           & $10\%$                      &  $0.00$  &  $0.00$  & $8.70$ & $95.75$  & $100.00$ \\ \hline \hline
\end{tabular}
\caption{The table reports the relative rejection frequency (in $\%$) of the null hypothesis $\EE[\epsilon_t^{2m}]<\infty$ for different sample sizes ($n$) and for different nominal levels. The bootstrap test is based on $B=1{,}999$ bootstrap replications and the rejection frequencies are computed using $S=2{,}000$ simulations. The data generating process is a GARCH($1$,$2$) with Gaussian innovations.}
\label{tab:7.2}
\end{table}

\subsection{Empirical Application}
\label{sec:7.6.2}

Next, we study the German stock market index DAX for the period January 2, 1990 until January 20, 2009. The information on the index price is retrieved from Yahoo Finance and daily (log-) returns (expressed in $\%$) are determined yielding $n=4{,}807$ observations.
%The index prices are retrieved from Yahoo Finance and daily (log-) returns (expressed in $\%$) are determined using $\epsilon_t = 100\log(p_t/p_{t-1})$, where $p_t$ denotes the closing price of the index at trading day $t$. 
%
\begin{figure}[h]
\centering
\begin{subfigure}[b]{0.45\textwidth}
\label{fig:2a}
	\centering
	\includegraphics[width=\textwidth]{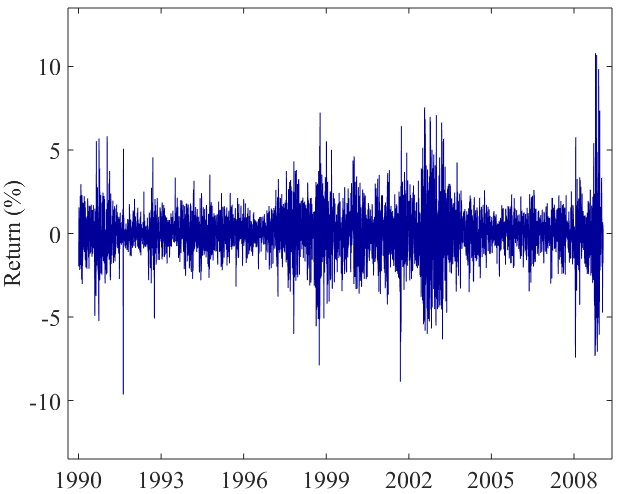}
            \caption{Returns of DAX}    
\end{subfigure}
\quad
\begin{subfigure}[b]{0.45\textwidth}  
\label{fig:2b}
	\centering 
	\includegraphics[width=\textwidth]{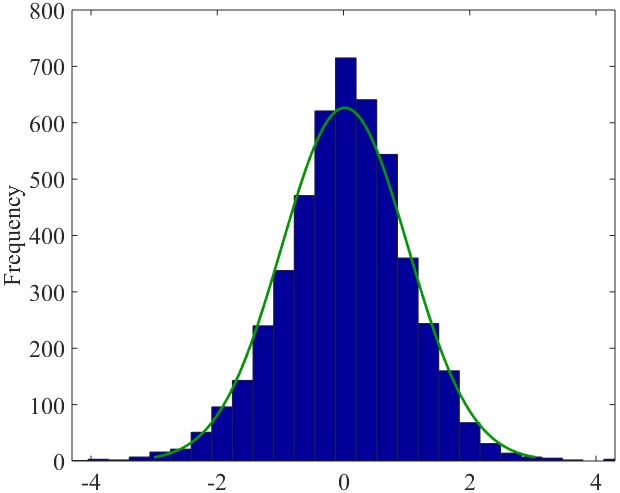}
	\caption{Histogram of the residuals $\hat{\eta}_t$'s} 
\end{subfigure}
\caption{The returns of the German stock market index DAX are plotted in (i) for the period January 2, 1990 -- January 20, 2009. The histogram of the residuals is plotted in (ii) after fitting a GARCH($1,2$) model. A scaled normal density is superimposed.} 
\label{fig:7.1}
\end{figure}
Figure \ref{fig:7.1}(i) displays the resulting series of returns. For this financial series \citeauthor{francq2011garch} (\citeyear{francq2011garch}, p.\ 206) strongly reject the null hypothesis of a GARCH($1$,$1$) in favor for a GARCH($1$,$2$) model. Estimating the latter, we present the corresponding point estimates in Table \ref{tab:7.3}, where the reported standard errors are obtained by means of bootstrap.
\begin{table}[h]
\centering
\begin{tabular}{lccccccc}
\hline \hline
                     & $\hat{\omega}_n$ & $\hat{\alpha}_{n,1}$ & $\hat{\alpha}_{n,2}$ & $\hat{\beta}_{1,n}$ & $\hat{\mu}_{n,4}$ & $\hat{\mu}_{n,6}$  \\ \hline
point estimate       & $0.0489$         & $0.0181$           & $0.0979$           & $0.8589$  & $7.9938$ & $629.9851$ \\
std. error & $0.0131$ &  $0.0213$  & $0.0293$  &  $0.0244$  &  $3.0543$ &  $421.4228$                 \\ \hline \hline
\end{tabular}
\caption{GARCH($1,2$) estimates and the estimates of the innovations' fourth and sixth moments. The standard errors are obtained by applying the fixed-design residual bootstrap with $B=9{,}999$ bootstrap replications.}
\label{tab:7.3}
\end{table}
%
% \begin{table}[h]
% \centering
% \begin{tabular}{lccccccc}
% \hline \hline
%                      & $\hat{\omega}_n$ & $\hat{\alpha}_{n,1}$ & $\hat{\alpha}_{n,2}$ & $\hat{\beta}_{1,n}$ & $\hat{\mu}_{n,4}$ & $\hat{\mu}_{n,6}$  \\ \hline
% point estimate       & $0.0491$         & $0.0182$           & $0.0983$           & $0.8584$  & $7.9927$ & $629.5224$ \\
% std. error & $0.0131$ &  $0.0214$  & $0.0298$  &  $0.0247$  &  $2.9970$ &  $415.6760$                 \\ \hline \hline
% \end{tabular}
% \caption{GARCH($1,2$) estimates and the estimates of the innovations' fourth and sixth moments. The standard errors are obtained by applying the fixed-design residual bootstrap with $B=1{,}999$ bootstrap replications.}
% \label{tab:7.3}
% \end{table}
%
Indeed we find a substantial point estimate for $\alpha_{0,2}$. Moreover, as documented in various studies we observe large volatility persistence in the data. The estimates of the fourth and sixth moments indicate that the innovation distribution is considerably more heavy-tailed than the standard normal distribution whose corresponding moments are $3$ and $15$, respectively.
Although this can be hardly seen from the histogram of the residuals in Figure \ref{fig:7.1}(ii), where a scaled normal distribution is superimposed,
%Figure \ref{fig:4.2} (b) plots the histogram of the residuals with the normal distribution superimposed.
%However,
we find that a (normalized) Student-t distribution with $9$ degrees of freedom provides an improved fit. Next, we test to what extend the financial time series at hand has finite moments. In particular we focus on the second, fourth and sixth moment corresponding to $m=1,2,3$, respectively. Table \ref{tab:7.4} presents the test-statistic and the corresponding p-value associated with the null hypothesis $\EE[\epsilon_t^{2m}]<\infty$. 
\begin{table}[h]
\centering
\begin{tabular}{lccccccc}
\hline \hline
                     & $m=1$ & $m=2$ & $m=3$ & \\ \hline
$\hat{T}_n$       & $0.9773$         & $1.0309$           & $1.5788$    \\
$\hat{p}_{n,B}^\star$ & $0.9927$ &  $0.1785$  & $0.0239$  \\ \hline \hline
\end{tabular}
\caption{Test-statistic and the corresponding p-value associated with the null hypothesis $\EE[\epsilon_t^{2m}]<\infty$. The p-value is based on $B=9{,}999$ bootstrap replications.}
\label{tab:7.4}
\end{table}
%
% \begin{table}[h]
% \centering
% \begin{tabular}{lccccccc}
% \hline \hline
%                      & $m=1$ & $m=2$ & $m=3$ & \\ \hline
% $\hat{T}_n$       & $0.9773$         & $1.0309$           & $1.5788$    \\
% $\hat{p}_{n,B}^\star$ & $0.9968$ &  $0.1726$  & $0.0005$  \\ \hline \hline
% \end{tabular}
% \caption{Test-statistic and the corresponding p-value associated with the null hypothesis $\EE[\epsilon_t^{2m}]<\infty$. The p-value is based on $B=1{,}999$ bootstrap replications.}
% \label{tab:7.4}
% \end{table}
%
For $m=1$, we find a test statistic smaller than unity and henceforth the corresponding p-value is large. For $m=2$, the test statistic is slightly larger than unity, however there is not enough evidence to reject the null hypothesis that the fourth moment exists. In contrast, for $m=3$, the test statistic is substantial larger and the corresponding p-value indicates that it is unlikely that the sixth moment is finite. Summing up: while the series seems to admit moments of second-order, there is strong evidence against the existence of sixth-order moments. With regard to the fourth-order moment, the test is inconclusive.

%The corresponding test statistic is given by $\hat{T}_{n} =  h(\hat{\theta}_n,\hat{\mu}_n)=\dots$. Plot bootstrap density 

%Report a figure like Figure 3 of \cite{francq2018testing}

\section{Concluding Remarks}
\label{sec:7.7}

This paper studies the joint inference on conditional volatility parameters and the innovation moments by means of bootstrap to test for the existence of moments for GARCH processes. For a general class of volatility models we derive the joint asymptotic distribution of the QML estimators and the empirical moments of the residuals. Further, we propose a fixed-design residual bootstrap to mimic  the estimators' finite sample distribution. The validity of the bootstrap method is proven under mild assumptions and a bootstrap-based test for the existence of moments in the GARCH($p$,$q$) model is proposed. This testing problem is non-standard as the test-statistic involves the spectral radius. Still the testing procedure is simple to implement and provides asymptotically correctly-sized tests without losing its consistency property. A simulation study demonstrates the test's size and power properties in finite samples. An empirical application illustrates the bootstrap-based testing approach, which can easily be extended to other GARCH-type settings.

\appendix

\section{Auxiliary Results and Proofs}

%\begin{proof}
\textit{Proof of Theorem \ref{thm:7.2}.}
We define $\hat{\mu}_{n,k}= \frac{1}{n}\sum_{t=1}^n \hat{\eta}_t^k$ for $k\in\{2,\dots,2m\}$ and expand
\begin{align*}
\sqrt{n}\big(\hat{\mu}_{n,k}-\mu_k) =\underbrace{\frac{1}{\sqrt{n}}\sum_{t=1}^n \epsilon_t^k \bigg(\frac{1}{\tilde{\sigma}_t^k(\hat{\theta}_n)}-\frac{1}{\sigma_t^k(\hat{\theta}_n)}\bigg)}_{I}+ \underbrace{\frac{1}{\sqrt{n}}\sum_{t=1}^n  \bigg(\frac{\epsilon_t^k}{\sigma_t^k(\hat{\theta}_n)}-\mu_k\bigg)}_{II}.
\end{align*}
Using the inequality $\big|(x+y)^k-x^k\big|\leq k2^{k-1}|y|\big(|x|^{k-1}+|y|^{k-1}\big)$ for $x, y \in \R$ and Assumptions \ref{as:7.3} and \ref{as:7.4}(\ref{as:7.4.1}) leads to
\begin{align*}
&\sup_{\theta \in \Theta}\bigg|\frac{1}{\tilde{\sigma}_t^k(\theta)}-\frac{1}{\sigma_t^k(\theta)}\bigg|=\sup_{\theta \in \Theta}\frac{|\tilde{\sigma}_t^k(\theta)-\sigma_t^k(\theta)|}{\tilde{\sigma}_t^k(\theta)\sigma_t^k(\theta)}\\
\leq& k2^{k-1} \sup_{\theta \in \Theta}\frac{\big|\tilde{\sigma}_t(\theta)-\sigma_t(\theta)\big| \Big(\sigma_t^{k-1}(\theta)+\big|\tilde{\sigma}_t(\theta)-\sigma_t(\theta)\big|^{k-1}\Big)}{\tilde{\sigma}_t^k(\theta)\sigma_t^k(\theta)}\\
\leq& k2^{k-1}\bigg(\frac{C_1\rho^t}{\underline{\omega}^{k+1}}+\frac{C_1^k\rho^{tk}}{\underline{\omega}^{2k}}\bigg)\leq k2^{k-1}\bigg(\frac{C_1}{\underline{\omega}^{k+1}}+\frac{C_1^k}{\underline{\omega}^{2k}}\bigg)\rho^t
\end{align*}
such that
\begin{align*}
|I|\leq k2^{k-1}\bigg(\frac{C_1}{\underline{\omega}^{k+1}}+\frac{C_1^k}{\underline{\omega}^{2k}}\bigg)\frac{1}{\sqrt{n}}\sum_{t=1}^n \rho^t|\epsilon_t|^k.
\end{align*}
For each $\varepsilon>0$, Markov's inequality and the $c_r$-inequality entail 
\begin{align*}
& \PP\bigg[\frac{1}{\sqrt{n}}\sum_{t=1}^n\rho^{t}|\epsilon_t|^k>\varepsilon\bigg] \leq \frac{1}{(\sqrt{n}\varepsilon)^{s/k}} \EE\Bigg[\bigg(\sum_{t=1}^n \rho^{t} |\epsilon_t|^k\bigg)^{s/k}\Bigg]\\
%%%
\leq& \frac{1}{(\sqrt{n}\varepsilon)^{s/k}}\sum_{t=1}^n \rho^{st/k} \EE\big[|\epsilon_t|^s\big] \leq \frac{1}{(\sqrt{n}\varepsilon)^{s/k}} \frac{\EE[|\epsilon_t|^s]}{\varepsilon^{s/k}(1-\rho^{s/k})}\to 0
\end{align*}
since $\rho \in(0,1)$ and $\EE[|\epsilon_t|^s]=\EE[\sigma_t^s]\EE[|\eta_t|^s]<\infty$ for some $s\in(0,1]$ by Assumptions \ref{as:7.3} and \ref{as:7.5}(\ref{as:7.5.1}). Hence, we have $|I| \overset{p}{\to}0$. Regarding $II$, a Taylor expansion yields
%
% \begin{align*}
% \frac{\tilde{\sigma}_t^k(\hat{\theta}_n)}{\tilde{\sigma}_t^k(\hat{\theta}_n^*)}=
% %%%
%  1-kD_t'\big(\hat{\theta}_n-\theta_0\big)+\frac{1}{2}\big(\hat{\theta}_n-\theta_0\big)'\frac{\sigma_t^k(\theta_0)}{\sigma_t^k(\bar{\theta}_n)}\Big(k(k+1)D_t(\bar{\theta}_n)D_t'(\bar{\theta}_n)-kH_t(\bar{\theta}_n)\Big) \big(\hat{\theta}_n-\theta_0\big),
% \end{align*}
%
%
\begin{align*}
II =& \frac{1}{\sqrt{n}}\sum_{t=1}^n  \big(\eta_t^k-\mu_k\big)-\frac{1}{n}\sum_{t=1}^n  k\eta_t^k D_t'\sqrt{n}\big(\hat{\theta}_n-\theta_0\big)\\
&+\frac{1}{2}\sqrt{n}\big(\hat{\theta}_n-\theta_0\big)'\frac{1}{n}\sum_{t=1}^n  \frac{\sigma_t^k(\theta_0)}{\sigma_t^k(\bar{\theta}_n)}\Big(k(k+1)D_t(\bar{\theta}_n)D_t'(\bar{\theta}_n)-kH_t(\bar{\theta}_n)\Big)\eta_t^k \big(\hat{\theta}_n-\theta_0\big),
\end{align*}
where $H_t(\theta)= \frac{1}{\sigma_t(\theta)}\frac{\partial^2 \sigma_t(\theta)}{\partial \theta \partial \theta'}$ and  $\bar{\theta}_n$ lies between $\hat{\theta}_n$ and $\theta_0$. The last term vanishes in probability since $\sqrt{n}\big(\hat{\theta}_n-\theta_0\big)=O_p(1)$ \citeauthor{francq2015risk} (\citeyear{francq2015risk}, Theorem 2) and 
\begin{align*}
&\bigg|\bigg|\frac{1}{n}\sum_{t=1}^n  \frac{\sigma_t^k(\theta_0)}{\sigma_t^k(\bar{\theta}_n)}\Big(k(k+1)D_t(\bar{\theta}_n)D_t'(\bar{\theta}_n)-kH_t(\bar{\theta}_n)\Big)\eta_t^k\bigg|\bigg|\\
%%%
\overset{a.s.}{\leq} &\frac{1}{n}\sum_{t=1}^n \sup_{\theta \in \mathscr{V}(\theta_0)} \frac{\sigma_t^k(\theta_0)}{\sigma_t^k(\theta)}\Big(k(k+1)||D_t(\theta)||^2+k||H_t(\theta)||\Big)|\eta_t|^k\\
%%%
\overset{a.s.}{\to}&\EE\bigg[ \sup_{\theta \in \mathscr{V}(\theta_0)} \frac{\sigma_t^k(\theta_0)}{\sigma_t^k(\theta)}\Big(k(k+1)||D_t(\theta)||^2+k||H_t(\theta)||\Big)|\eta_t|^k\bigg]\\
%%%
\leq& \EE\big[S_t^{2k}\big]\Big(k(k+1)\EE\big[U_t^{4}\big]+k\EE\big[V_t^{2}\big]\Big)\EE\big[|\eta_t|^k\big]<\infty
\end{align*}
with $S_t = \sup_{\theta \in \mathscr{V}(\theta_0)} \frac{\sigma_t(\theta_0)}{\sigma_t(\theta)}$, $U_t = \sup_{\theta \in \mathscr{V}(\theta_0)}||D_t(\theta)||$ and $V_t = \sup_{\theta \in \mathscr{V}(\theta_0)}||H_t(\theta)||$, where we used the uniform ergodic theorem  (c.f.\ \citeauthor{francq2011garch}, \citeyear{francq2011garch}, Theorem A.2 and p.\ 181). Further, the ergodic theorem implies $\frac{1}{n}\sum_{t=1}^n  k\eta_t^kD_t\overset{a.s.}{\to}k\mu_k \Omega$. Combining results, we have
\begin{align*}
\sqrt{n}\big(\hat{\mu}_{n}-\mu\big) = \frac{1}{\sqrt{n}}\sum_{t=1}^n\big(h(\eta_t)-\mu\big)-\nu \Omega'\sqrt{n}\big(\hat{\theta}_n-\theta_0\big)+o_p(1).
\end{align*}
Inserting the expansion for $\sqrt{n}\big(\hat{\theta}_n-\theta_0\big)$ given in \cite{francq2015risk}, i.e.\
\begin{align*}
\sqrt{n}\big(\hat{\theta}_n-\theta_0\big) = \frac{1}{2}J^{-1} \frac{1}{\sqrt{n}}\sum_{t=1}^n D_t\big(\eta_t^{2}-1\big)+o_{p}(1),
\end{align*}
we establish
\begin{align*}
      \begin{pmatrix}
      \sqrt{n}(\hat{\theta}_n-\theta_0)\\
    \sqrt{n}(\hat{\mu}_{n} - \mu)
      \end{pmatrix}
      = 
      \begin{pmatrix}
      \frac{1}{2}J^{-1}&O_{m\times m}\\
    -\frac{1}{2}\nu\Omega'J^{-1} & I_{m\times m}
      \end{pmatrix}
        \begin{pmatrix}
      \frac{1}{\sqrt{n}}\sum_{t=1}^n D_t\big(\eta_t^{2}-1\big)\\
\frac{1}{\sqrt{n}}\sum_{t=1}^n\big(h(\eta_t)-\mu\big)
      \end{pmatrix}+o_{p}(1).
\end{align*}
The Wold-Cràmer device and the central limit theorem for martingale differences (c.f.\ \citeauthor{francq2011garch}, \citeyear{francq2011garch}, Corollary A.1) implies
\begin{align*}
  \frac{1}{\sqrt{n}}\sum_{t=1}^n    \begin{pmatrix}
       D_t \big(\eta_t^{2}-1\big)\\
   h(\eta_t)-\mu
      \end{pmatrix}
\overset{d}{\to}N (0,\Psi)
      \quad \text{with}\quad \Psi =\begin{pmatrix}
     (\kappa-1)J&  \Omega \xi'\\
    \xi \Omega' & \Upsilon.
     \end{pmatrix}
\end{align*}
The result follows noting that
\begin{align*}
      \begin{pmatrix}
      \frac{1}{2}J^{-1}&O_{m\times m}\\
    -\frac{1}{2}\nu\Omega'J^{-1} & I_{m\times m}
      \end{pmatrix}
\Psi
     \begin{pmatrix}
      \frac{1}{2}J^{-1}&O_{m\times m}\\
    -\frac{1}{2}\nu\Omega'J^{-1} & I_{m\times m}
      \end{pmatrix}' = 
      \begin{pmatrix}
      \frac{\mu_4-1}{4}J^{-1} & -J^{-1}\Omega \nu'\\
    -\nu \Omega'J^{-1} & \Xi
      \end{pmatrix},
\end{align*}
which completes the proof. \qed
%\end{proof}

\begin{lemma}
\label{lem:7.2}
Under Assumptions \ref{as:7.1}--\ref{as:7.4}, \ref{as:7.5}(\ref{as:7.5.1}), \ref{as:7.5}(\ref{as:7.5.3}), \ref{as:7.6}, \ref{as:7.9} and \ref{as:7.10} with $a=-1,\max\{4,2d\}$, $b=4$ and $c=2$, we have
\begin{align*}
  \frac{1}{\sqrt{n}}\sum_{t=1}^n    \begin{pmatrix}
       \hat{D}_t \big(\eta_t^{*2}-1\big)\\
   h(\eta_t^*)-\hat{\mu}_n
      \end{pmatrix}
\overset{d^*}{\to}N (0,\Psi)
      \quad \text{with}\quad \Psi =\begin{pmatrix}
     (\kappa-1)J&  \Omega \xi'\\
    \xi \Omega' & \Upsilon
     \end{pmatrix}
\end{align*}
almost surely.
\end{lemma}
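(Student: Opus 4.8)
The plan is to establish a conditional (bootstrap) Lindeberg--Feller central limit theorem for a triangular array of row-wise independent vectors, verified pathwise for almost every data sequence. Write $Z_t^\ast = \big(\hat D_t(\eta_t^{\ast 2}-1)',\ (h(\eta_t^\ast)-\hat\mu_n)'\big)'$ and condition on the sample $\sigma$-field, with respect to which $\hat\theta_n$, the $\hat D_t$ and $\hat\mu_n$ are measurable. Since $\eta_1^\ast,\dots,\eta_n^\ast$ are i.i.d.\ draws from $\hat{\mathbbm F}_n$, the vectors $Z_t^\ast$ are conditionally independent (though not identically distributed, as $\hat D_t$ depends on $t$). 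By the Cramér--Wold device it suffices to treat an arbitrary linear combination $\lambda'Z_t^\ast$, and I would conclude from the Lindeberg--Feller theorem for independent arrays applied for almost every realization.

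First I would handle the centering. The second block has conditional mean exactly zero, since $\EE^\ast[h(\eta_t^\ast)]=\tfrac1n\sum_s h(\hat\eta_s)=\hat\mu_n$. For the first block $\EE^\ast[\eta_t^{\ast 2}-1]=\hat\mu_{n,2}-1$, so I must show $\tfrac1{\sqrt n}\sum_t \hat D_t(\hat\mu_{n,2}-1)\to0$ a.s.\ to justify centering at the constant $1$. This is where Assumption \ref{as:7.10} enters: differentiating $\lambda\sigma_t(\theta)=\sigma_t(\theta_\lambda)$ in $\lambda$ at $\lambda=1$ produces a direction $c(\theta)$ with $c(\theta)'D_t(\theta)=1$ for all $t$; projecting the first-order condition $\tfrac1n\sum_t(\hat\eta_t^2-1)\tilde D_t(\hat\theta_n)=0$ onto $c(\hat\theta_n)$ and controlling the initial-value discrepancy via Assumption \ref{as:7.4} (together with the geometric decay of $\rho^t$) yields $\sqrt n(\hat\mu_{n,2}-1)\to0$ a.s., which with $\tfrac1n\sum_t\hat D_t\to\Omega$ closes the gap.

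Next I would compute the conditional covariance and show it converges a.s.\ to $\Psi$; note that because the second block is exactly centred, the use of $1$ in place of $\hat\mu_{n,2}$ in the first block leaves all covariances unchanged. The $(1,1)$ block equals $(\hat\mu_{n,4}-\hat\mu_{n,2}^2)\,\tfrac1n\sum_t\hat D_t\hat D_t'\to(\mu_4-1)J=(\kappa-1)J$; the cross block equals $\big(\tfrac1n\sum_t\hat D_t\big)\hat\xi_n'$ with $\hat\xi_n\to\xi$, hence $\to\Omega\xi'$; and the $(2,2)$ block is the empirical variance of $h(\hat\eta_t)$, converging to $\Upsilon$. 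Here I would invoke the a.s.\ consistency $\hat\theta_n\to\theta_0$, the uniform ergodic theorem giving $\tfrac1n\sum\hat D_t\hat D_t'\to J$ and $\tfrac1n\sum\hat D_t\to\Omega$, and the a.s.\ convergence of the empirical residual moments $\hat\mu_{n,2k}\to\mu_{2k}$ (for which $a=\max\{4,2d\}$ is used) and of the empirical (co)variances to $\xi$ and $\Upsilon$, exactly as in the proof of Theorem \ref{thm:7.2}.

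The main obstacle is the Lindeberg condition, because Assumption \ref{as:7.5}(\ref{as:7.5.3}) supplies only two moments of $h$, so a Lyapunov shortcut with a fixed exponent is unavailable and truncation must be carried out directly. The decisive input is $\max_{t\le n}\|\hat D_t\|=o(\sqrt n)$ a.s.: Assumption \ref{as:7.9}(ii) with $b=4$ gives $\tfrac1n\sum_t\|\hat D_t\|^4=O(1)$, whence $\max_{t\le n}\|\hat D_t\|^2\le\big(\sum_t\|\hat D_t\|^4\big)^{1/2}=O(\sqrt n)$ and thus $\max_t\|\hat D_t\|=O(n^{1/4})$. Consequently $\{\|\hat D_t(\eta_t^{\ast2}-1)\|>\varepsilon\sqrt n\}$ forces $|\eta_t^{\ast2}-1|$ above a threshold $\omega_n\to\infty$, so the first-block Lindeberg sum is dominated by $\big(\tfrac1n\sum_t\|\hat D_t\|^2\big)\,\tfrac1n\sum_s(\hat\eta_s^2-1)^2\mathbbm{1}\{|\hat\eta_s^2-1|>\varepsilon\omega_n\}$, which vanishes a.s.\ by a uniform-integrability argument based on $\tfrac1n\sum_s(\hat\eta_s^2-1)^2\to\mu_4-1<\infty$; the second block is treated identically using $\tfrac1n\sum_s\|h(\hat\eta_s)-\hat\mu_n\|^2\to\EE\|h(\eta)-\mu\|^2<\infty$. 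With the centering, conditional covariance and Lindeberg condition in hand, the Lindeberg--Feller theorem together with Cramér--Wold delivers the stated convergence almost surely.
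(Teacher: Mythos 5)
Your proposal is correct and follows essentially the same route as the paper's proof: Cram\'er--Wold plus a conditional Lindeberg CLT for triangular arrays, with the centering of the first block handled through the scaling-stability/first-order-condition argument (the paper cites \citeauthor{beutner2018residual}, whose version shows $\EE^*[\eta_t^{*2}]=1$ exactly for large $n$ a.s., so your $\sqrt{n}(\hat{\mu}_{n,2}-1)\to 0$ step is in fact exact), the conditional covariance converging a.s.\ to $\Psi$ via a.s.\ convergence of the bootstrap moments, and the Lindeberg condition verified by truncation using $\max_{t\leq n}||\hat{D}_t||=o(\sqrt{n})$. The only cosmetic difference is that you truncate at a growing threshold driven by $\max_t ||\hat{D}_t||$, whereas the paper truncates $|\eta_t^*|$ at a fixed level $C$ and lets $C\to\infty$; both rest on the same two ingredients, so this is a stylistic variant rather than a different proof.
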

\begin{proof}
\citeauthor{beutner2018residual} (\citeyear{beutner2018residual}, proof of Lemma 7) shows that $\frac{1}{\sqrt{n}} \sum_{t=1}^n \hat{D}_t \big(\EE^*[\eta_t^{*2}]-1\big)=0$ for sufficiently large $n$ almost surely since $\hat{\theta}_n\overset{a.s.}{\to}\theta_0 \in \mathring{\Theta}$ and $\EE^*\big[\eta_t^{*2}\big]=1$ whenever $\hat{\theta}_n \in \mathring{\Theta}$ under Assumption \ref{as:7.10}. It remains to show that for each  $\lambda=(\lambda_1',\lambda_2)'  \in \R^{r+dim(\mu)}$ with $||\lambda||\neq 0$
\begin{align*}
 \sum_{t=1}^n \underbrace{\frac{1}{\sqrt{n}}\lambda'  \begin{pmatrix}
       \hat{D}_t \big(\eta_t^{*2}-\EE^*[\eta_t^{*2}]\big)\\
   h(\eta_t^*)-\hat{\mu}_n
      \end{pmatrix}}_{Z_{n,t}^*}
    \overset{d^*}{\to}N \left(0,\lambda'\Psi \lambda\right)
\end{align*}
almost surely by the Cram\'er-Wold device. By construction, we have $\EE^*\big[Z_{n,t}^*\big]=0$. Further, we have that $s_n^2 =\sum_{t=1}^n\Var^*\big[Z_{n,t}^{*}\big]$ is equal to
\begin{align}
\lambda'\begin{pmatrix}
      \Var^*[\eta_t^{*2}]\hat{J}_n&  \hat{\Omega}_n\Cov^*[\eta_t^{*2},h(\eta_t^*)]'\\
    \Cov^*[\eta_t^{*2},h(\eta_t^*)] \hat{\Omega}_n' & \Var^*[h(\eta_t^*)] 
      \end{pmatrix}\lambda.
\end{align}
\citeauthor{beutner2018residual} (\citeyear{beutner2018residual}, Lemma 2) gives $\hat{J}_n \overset{a.s.}{\to}J$ and $\hat{\Omega}_n\overset{a.s.}{\to}\Omega$. Further, \citeauthor{beutner2018residual} (\citeyear{beutner2018residual}, Lemma 5) yields $\EE^*[\eta_t^{*k}]\overset{a.s.}{\to}\EE[\eta_t^{k}]$ for $k=1,\dots,d$ implying $\Cov^*[\eta_t^{*2},h(\eta_t^*)]\overset{a.s.}{\to}\Cov[\eta_t^{2},h(\eta_t)]=\xi$ and $\Var^*[h(\eta_t^*)]\overset{a.s.}{\to}\Var[h(\eta_t)]=\Upsilon$. Thus, we get $s_n^2\overset{a.s.}{\to} \lambda'\Psi \lambda$. Next, we verify Lindeberg condition. For any $\varepsilon>0$ 
\begin{align*}
&  \sum_{t=1}^n  \EE^*\big[Z_{n,t}^{*2}\mathbbm{1}_{\{|Z_{n,t}^{*}|\geq s_n \varepsilon \}}\big] \leq  \underbrace{\sum_{t=1}^n\EE^*\big[Z_{n,t}^{*2}\mathbbm{1}_{\{|\eta_t^{*}|> C \}}\big]}_{I}+ \underbrace{\sum_{t=1}^n  \EE^*\big[Z_{n,t}^{*2}\mathbbm{1}_{\{|Z_{n,t}^{*}|\geq s_n \varepsilon \}}\mathbbm{1}_{\{|\eta_t^{*}|\leq C \}}\big]}_{II}
\end{align*}
holds, where $C>0$. Employing the elementary inequality $\Big(\sum_{i=1}^N x_i\Big)^2\leq N \sum_{i=1}^N x_i^2$ for all $x_1,\dots,x_N \in \R$ and $N \in \N$ we find that
\begin{align*}
Z_{n,t}^{*2} \leq & \frac{4}{n} \Big(\big(\lambda_1'\hat{D}_t\big)^2 \big(\eta_t^{*4}+\EE^*[\eta_t^{*2}]^2\big)+
    ||\lambda_2||^2 \big(||h(\eta_t^{*})||^2+||\hat{\mu}_{n}||^2\big)\Big).
\end{align*}
Thus, we obtain
\begin{align*}
I \leq &  \frac{4}{n}\sum_{t=1}^n\EE^*\bigg[ \Big(\big(\lambda_1'\hat{D}_t\big)^2 \big(\eta_t^{*4}+\EE^*[\eta_t^{*2}]^2\big)+
    ||\lambda_2||^2 \big(||h(\eta_t^{*})||^2+||\hat{\mu}_{n}||^2\big)\Big)\mathbbm{1}_{\{|\eta_t^{*}|> C \}}\bigg]\\
%%%
= & 8 \Big( \lambda_1' \hat{J}_n \lambda_1 \EE^*\big[\eta_t^{*4}\mathbbm{1}_{\{|\eta_t^{*}|> C \}}\big]+||\lambda_2||^2 \EE^*\big[||h(\eta_t^{*})||^2\mathbbm{1}_{\{|\eta_t^{*}|> C \}}\big]\\
&\qquad +\big(\lambda_1' \hat{J}_n \lambda_1 \EE^*[\eta_t^{*2}]^2+
    \lambda_2^2||\hat{\mu}_{n}||^2\big)\EE^*\big[\mathbbm{1}_{\{|\eta_t^{*}|> C \}}\big]\Big)\\
    \overset{a.s.}{\to}&8 \Big( \lambda_1' J \lambda_1 \EE\big[\eta_t^{4}\mathbbm{1}_{\{|\eta_t|> C \}}\big]+||\lambda_2||^2 \EE\big[||h(\eta_t)||^2\mathbbm{1}_{\{|\eta_t|> C \}}\big]\\
&\qquad +\big(\lambda_1' J \lambda_1 \EE[\eta_t^{2}]^2+
    \lambda_2^2||\hat{\mu}_{n}||^2\big)\EE\big[\mathbbm{1}_{\{|\eta_t|> C \}}\big]\Big)
\end{align*}
and choosing $C$ sufficiently large yields $I\overset{a.s.}{\to}0$. Given a value of $C$, we have
\begin{align*}
II\leq  &  \frac{4}{n}  \sum_{t=1}^n \EE^*\bigg[ \Big(\big(\lambda_1'\hat{D}_t\big)^2 \big(\eta_t^{*4}+\EE^*[\eta_t^{*2}]^2\big)+
    ||\lambda_2||^2 \big(||h(\eta_t^{*})||^2+||\hat{\mu}_{n}||^2\big)\Big)\\
    & \qquad \qquad \quad \times \mathbbm{1}_{ \{ ||\lambda_1|| (\eta_t^{*2}+\EE^*[\eta_t^{*2}]) \max_{t} ||\hat{D}_t||+
    ||\lambda_2||(||h(\eta_t^*)||+||\hat{\mu}_{n}||)\geq \sqrt{n} s_n \varepsilon \}}\mathbbm{1}_{\{|\eta_t^{*}|\leq C \}}\bigg]\\
%%%
\leq  &  \frac{4}{n}  \sum_{t=1}^n  \Big(\big(\lambda_1'\hat{D}_t\big)^2 \big(C^4+\EE^*[\eta_t^{*2}]^2\big)+
    \lambda_2^2 \big(||h(C)||^2+||\hat{\mu}_{n}||^2\big)\Big)\\
    & \qquad \qquad \quad \times \mathbbm{1}_{ \{ ||\lambda_1|| (C^2+\EE^*[\eta_t^{*2}]) \max_{t} ||\hat{D}_t||+
    ||\lambda_2||(||h(C)||+||\hat{\mu}_{n}||)\geq \sqrt{n} s_n \varepsilon \}}\\
%%%
\leq  &  4  \Big(\lambda_1' \hat{J}_n \lambda_1 \big(C^4+\EE^*[\eta_t^{*2}]^2\big)+
    \lambda_2^2 \big(||h(C)||^2+||\hat{\mu}_{n}||^2\big)\Big)\\
    & \qquad \qquad \quad \times \mathbbm{1}_{ \{ ||\lambda_1|| (C^2+\EE^*[\eta_t^{*2}]) \max_{t} ||\hat{D}_t||+
    |\lambda_2|(||h(C)||+||\hat{\mu}_{n}||)\geq \sqrt{n} s_n \varepsilon \}}\\    
    %%%
    \overset{a.s.}{\to}& 4  \Big(\lambda_1' J \lambda_1 \big(C^4+\EE[\eta_t^{2}]^2\big)+
    \lambda_2^2 \big(||h(C)||^2+||\hat{\mu}_{n}||^2\big)\Big)\times 0 = 0
\end{align*}
as $\max_t||\hat{D}_t||/\sqrt{n}\overset{a.s.}{\to}0$. Combining results, gives $\frac{1}{s_n^2}\sum_{t=1}^n  \EE^*\big[Z_{n,t}^{*2}\mathbbm{1}_{\{|Z_{n,t}^{*}|\geq s_n \epsilon \}}\big] \overset{a.s.}{\to}0$.
 The Central Limit Theorem for triangular arrays (c.f.\ \citeauthor{billingsley1986probability}, \citeyear{billingsley1986probability}, Theorem 27.3) implies that $\sum_{t=1}^n Z_{n,t}^*$ converges in conditional distribution to $N\big(0,\lambda'\Psi\lambda \big)$ almost surely, which completes the proof. 
\end{proof}

%\begin{proof}
\noindent
\textit{Proof of Theorem \ref{thm:7.3}.}
We define $\hat{\mu}_{n,k}^*= \frac{1}{n}\sum_{t=1}^n \hat{\eta}_t^{*k}$ for $k\in\{2,\dots,2m\}$; a Taylor expansion yields
\begin{align*}
&\sqrt{n}\big(\hat{\mu}_{n,k}^*-\hat{\mu}_{n,k}) = \frac{1}{\sqrt{n}}\sum_{t=1}^n  \bigg(\frac{\tilde{\sigma}_t^k(\hat{\theta}_n)}{\tilde{\sigma}_t^k(\hat{\theta}_n^*)}\eta_t^{*k}-\mu_k\bigg)\\
%%%
=& \frac{1}{\sqrt{n}}\sum_{t=1}^n  \big(\eta_t^{*k}-\hat{\mu}_{n,k}\big)-\underbrace{\frac{1}{n}\sum_{t=1}^n  k\eta_t^{*k} \hat{D}_t'}_{I}\sqrt{n}\big(\hat{\theta}_n^*-\hat{\theta}_n\big)\\
%%%
&+\frac{1}{2}\sqrt{n}\big(\hat{\theta}_n^*-\hat{\theta}_n\big)'\underbrace{\frac{1}{n \sqrt{n}}\sum_{t=1}^n  \frac{\sigma_t^k(\hat{\theta}_n)}{\sigma_t^k(\breve{\theta}_n)}\Big(k(k+1)\tilde{D}_t(\breve{\theta}_n)\tilde{D}_t'(\breve{\theta}_n)-k\tilde{H}_t(\breve{\theta}_n)\Big)\eta_t^{*k}}_{II} \sqrt{n}\big(\hat{\theta}_n^*-\hat{\theta}_n\big)
\end{align*}
with $\breve{\theta}_n$ between $\hat{\theta}_n^*$ and $\hat{\theta}_n$. We find $I \overset{p^*}{\to}k\mu_k\Omega'$ almost surely since 
\begin{align*}
\EE^*[I]=&k \EE^*\big[\eta_t^{*k}\big]\frac{1}{n}\sum_{t=1}^n\hat{D}_t'\overset{a.s.}{\to}k\mu_k\Omega'\\
\Var^*[I]=&\frac{k^2}{n}\Var^*\big[\eta_t^{*k}\big]\frac{1}{n}\sum_{t=1}^n\hat{D}_t\hat{D}_t'\overset{a.s.}{\to}0\Var\big[\eta_t^k\big]J=O_{r\times r},
\end{align*}
where the convergence follows from \citeauthor{beutner2018residual} (\citeyear{beutner2018residual}, Lemma 2).  Consider the second term; since $\hat{\theta}_n\overset{a.s.}{\to}\theta_0$ (Theorem 1) and $\hat{\theta}_n^*\overset{p^*}{\to}\theta_0$  almost surely (\citeauthor{beutner2018residual}, \citeyear{beutner2018residual}, Lemma 5), we have $\PP^*\big[\breve{\theta}_n \notin \mathscr{V}(\theta_0)\big]\overset{a.s.}{\to}0$. Thus, for every $\varepsilon>0$ we obtain 
\begin{align*}
&\PP^*\big[||II||\geq \varepsilon \big]\\
%%%
\leq & \PP^*\Bigg[\bigg|\bigg|\frac{1}{n \sqrt{n}}\sum_{t=1}^n  \frac{\sigma_t^k(\hat{\theta}_n)}{\sigma_t^k(\breve{\theta}_n)}\Big(k(k+1)\tilde{D}_t(\breve{\theta}_n)\tilde{D}_t'(\breve{\theta}_n)-k\tilde{H}_t(\breve{\theta}_n)\Big)\eta_t^{*k}\bigg|\bigg|\geq \varepsilon \cap \breve{\theta}_n \in \mathscr{V}(\theta_0)\Bigg]\\
&\qquad + \PP^*\Big[\breve{\theta}_n \notin \mathscr{V}(\theta_0)\Big]\\
%%%
\leq & \PP^*\Bigg[\frac{1}{n \sqrt{n}}\sum_{t=1}^n\sup_{\theta \in \mathscr{V}(\theta_0)}\frac{\tilde{\sigma}_t(\hat{\theta}_n)}{\tilde{\sigma}_t(\theta)}k\bigg((k+1)\sup_{\theta \in \mathscr{V}(\theta_0)}\big|\big|\tilde{D}_t(\theta)\big|\big|^2+\sup_{\theta \in \mathscr{V}(\theta_0)}\big|\big|\tilde{H}_t(\theta)\big|\big|\bigg) \big|\eta_t^{*}\big|^k\geq \varepsilon\Bigg]+o(1)\\
%%%
\leq & \frac{1}{\varepsilon}\EE^*\Bigg[\frac{1}{n \sqrt{n}}\sum_{t=1}^n\sup_{\theta \in \mathscr{V}(\theta_0)}\frac{\tilde{\sigma}_t(\hat{\theta}_n)}{\tilde{\sigma}_t(\theta)}k\bigg((k+1)\sup_{\theta \in \mathscr{V}(\theta_0)}\big|\big|\tilde{D}_t(\theta)\big|\big|^2+\sup_{\theta \in \mathscr{V}(\theta_0)}\big|\big|\tilde{H}_t(\theta)\big|\big|\bigg) \big|\eta_t^{*}\big|^k\Bigg]+o(1)\\
%%%
= &\frac{1}{\varepsilon} \EE^*\big[|\eta_t^{*}|^k\big]\frac{1}{n \sqrt{n}}\sum_{t=1}^n\sup_{\theta \in \mathscr{V}(\theta_0)}\frac{\tilde{\sigma}_t(\hat{\theta}_n)}{\tilde{\sigma}_t(\theta)}k\bigg((k+1)\sup_{\theta \in \mathscr{V}(\theta_0)}\big|\big|\tilde{D}_t(\theta)\big|\big|^2+\sup_{\theta \in \mathscr{V}(\theta_0)}\big|\big|\tilde{H}_t(\theta)\big|\big|\bigg) +o(1)
\end{align*}
almost surely, where the third inequality follows from Markov's inequality.
Analogously to (\citeauthor{beutner2018residual}, \citeyear{beutner2018residual}, Equation A.71) one can show that
\begin{align}
\label{eq:4.A.71}
\frac{1}{n}\sum_{t=1}^n\sup_{\theta \in \mathscr{V}(\theta_0)}\frac{\tilde{\sigma}_t(\hat{\theta}_n)}{\tilde{\sigma}_t(\theta)}k\bigg((k+1)\sup_{\theta \in \mathscr{V}(\theta_0)}\big|\big|\tilde{D}_t(\theta)\big|\big|^2+\sup_{\theta \in \mathscr{V}(\theta_0)}\big|\big|\tilde{H}_t(\theta)\big|\big|\bigg)
\end{align}
is stochastically bounded. Together with $\EE^*\big[|\eta_t^{*}|^k\big]\leq \EE^*\big[\eta_t^{*2k}\big]^{\frac{1}{2}}\overset{a.s.}{\to}\EE\big[\eta_t^{2k}\big]^{\frac{1}{2}}<\infty$ we establish $||II||\overset{p^*}{\to}0$ in probability. Combining results, we have
\begin{align*}
\sqrt{n}\big(\hat{\mu}_{n}^*-\hat{\mu}_{n}\big) = \frac{1}{\sqrt{n}}\sum_{t=1}^n\big(h(\eta_t^*)-\hat{\mu}_{n}\big)-\nu \Omega'\sqrt{n}\big(\hat{\theta}_n-\theta_0\big)+o_{p^*}(1)
\end{align*}
in probability. Together with the expansion of \citeauthor{beutner2018residual} (\citeyear{beutner2018residual}, Equation 4.4):
\begin{align*}
\sqrt{n}\big(\hat{\theta}_n^*-\hat{\theta}_n\big) = \frac{1}{2}J^{-1} \frac{1}{\sqrt{n}}\sum_{t=1}^n \hat{D}_t\big(\eta_t^{2}-1\big)+o_{p^*}(1)
\end{align*}
almost surely, we establish
\begin{align*}
      \begin{pmatrix}
      \sqrt{n}(\hat{\theta}_n^*-\hat{\theta}_n)\\
    \sqrt{n}(\hat{\mu}_{n}^* - \hat{\mu}_{n})
      \end{pmatrix}
      = 
      \begin{pmatrix}
      \frac{1}{2}J^{-1}&O_{m\times m}\\
    -\frac{1}{2}\nu\Omega'J^{-1} & I_{m\times m}
      \end{pmatrix}
        \begin{pmatrix}
      \frac{1}{\sqrt{n}}\sum_{t=1}^n \hat{D}_t\big(\eta_t^{*2}-1\big)\\
\frac{1}{\sqrt{n}}\sum_{t=1}^n\big(h(\eta_t^*)-\hat{\mu}_n\big)
      \end{pmatrix}+o_{p^*}(1)
\end{align*}
in probability. Employing Lemma \ref{lem:7.2} completes the proof.
\qed
%\end{proof}
\vspace{0.5cm}

%\begin{proof}
\noindent
\textit{Proof of Proposition \ref{prop:7.1}.}
The claim is proven by induction. For $m=1$ we can decompose $A(\theta,\eta)=B_{0,1}(\theta)+B_{1,1}(\theta)\eta^2$, where $B_{0,1}(\theta)$ and $B_{1,1}(\theta)$ are matrices of dimension $(p+q)\times(p+q)$. Presuming $A(\theta,\eta)^{\otimes m}= \sum_{k=0}^m B_{k,m}(\theta) \eta^{2k}$ holds true, we obtain
\begin{align*}
&A(\theta,\eta)^{\otimes (m+1)}= A(\theta,\eta)^{\otimes m} \otimes A(\theta,\eta)=\bigg(\sum_{k=0}^m B_{k,m}(\theta) \eta^{2k}\bigg) \otimes \big(B_{0,1}(\theta)+B_{1,1}(\theta)\eta^2\big)\\
=& \underbrace{\big(B_{0,m}(\theta)\otimes B_{0,1}(\theta)\big)}_{=B_{0,m+1}(\theta)}+ \sum_{k=0}^{m-1} \underbrace{\Big(\big(B_{k+1,m}(\theta)\otimes B_{0,1}(\theta)\big)+\big(B_{k,m}(\theta) \otimes B_{1,1}(\theta)\big)\Big)}_{=B_{k+1,m+1}(\theta)}\eta^{2(k+1)}\\
&\quad +\underbrace{\big(B_{m,m}(\theta)\otimes B_{1,1}(\theta)\big)}_{=B_{m+1,m+1}(\theta)}\eta^{2(m+1)},
\end{align*}
which completes the induction step and verifies the lemma's claim.
\qed
%\end{proof}
% \begin{align}
% B_{01}(\theta) = \left(\begin{array}{@{}ccc|ccc@{}}
%     0 & \dots & 0 & 0 & \dots & 0 \\
%      & I_{(q-1)\times(q-1)} & O_{(q-1)\times 1} &  & O_{(q-1)\times p} & \\\hline
%     \alpha_1  & \dots & \alpha_q  & \beta_1  & \dots & \beta_p  \\
%      & O_{(p-1)\times q} &  &  & I_{(p-1)\times(p-1)} & O_{(p-1)\times 1}
%   \end{array}\right)
% \end{align}
% %
% and
% %
% \begin{align}
% B_{1,1}(\theta) = \left(\begin{array}{@{}ccc|ccc@{}}
%     \alpha_1 & \dots & \alpha_q  & \beta_1  & \dots & \beta_p  \\
%      & O_{(q-1)\times(q-1)} & O_{(q-1)\times 1} &  & O_{(q-1)\times p} & \\\hline
%     0  & \dots & 0  & 0  & \dots & 0  \\
%      & O_{(p-1)\times q} &  &  & O_{(p-1)\times(p-1)} & O_{(p-1)\times 1}
%   \end{array}\right)\\
%   = \left(\begin{array}{@{}ccc|ccc@{}}
%     \alpha' & \beta'  \\
%      O_{(p+q-1)\times q}& O_{(p+q-1)\times p} 
%   \end{array}\right)
% \end{align}

%\begin{proof}
%\citeauthor{francq2015risk} (\citeyear{francq2015risk}, Theorem 1) establish $\hat{\theta}_n \overset{a.s.}{\to} \theta_0$. The second claim follows from \citeauthor{beutner2018residual} (\citeyear{beutner2018residual}, Lemma 2).
%\end{proof}

\begin{lemma}
\label{lem:7.1}
Suppose Assumptions \ref{as:7.1}--\ref{as:7.3}, \ref{as:7.4}(\ref{as:7.4.1}), \ref{as:7.5}(\ref{as:7.5.1}), \ref{as:7.9}(i) hold with $a=d$. If, in addition $\tau(\theta_0,\mu)\leq 1$ and $||\mu||<\infty$ hold, then the estimator in \eqref{eq:238748761} is strongly consistent, i.e.\ $\hat{\theta}_n^c\overset{a.s.}{\to}\theta_0$.
\end{lemma}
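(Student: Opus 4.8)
The plan is to run the classical strong-consistency argument for GARCH-type QML estimators, but over the \emph{random} feasible set $\Theta_n^c$, and to control that randomness by a two-sided bound on the criterion evaluated at $\hat\theta_n^c$. Write $\tilde L_n(\theta)=\frac1n\sum_{t=1}^n\tilde\ell_t(\theta)$, let $L_n$ be the analogous average built from the non-truncated volatility $\sigma_t(\theta)$, and set $L(\theta)=\EE[\ell_t(\theta)]$ with $\ell_t(\theta)=-\frac12(\epsilon_t/\sigma_t(\theta))^2-\log\sigma_t(\theta)$. First I would assemble the ingredients that are by now standard for such models (see \citeauthor{francq2015risk}, \citeyear{francq2015risk}): under Assumptions \ref{as:7.1}--\ref{as:7.3} and \ref{as:7.4}(\ref{as:7.4.1}) the initial values are asymptotically negligible, $\sup_{\theta\in\Theta}|\tilde L_n(\theta)-L_n(\theta)|\overset{a.s.}{\to}0$; the uniform ergodic theorem gives $\sup_{\theta\in\Theta}|L_n(\theta)-L(\theta)|\overset{a.s.}{\to}0$; and the identification condition in Assumption \ref{as:7.3} makes $L$ continuous with a \emph{strict} global maximum at $\theta_0$. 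As a by-product the unconstrained estimator is consistent, $\hat\theta_n\overset{a.s.}{\to}\theta_0$, and combined with Assumptions \ref{as:7.5}(\ref{as:7.5.1}), \ref{as:7.9}(i) (with $a=d$) and the hypothesis $\|\mu\|<\infty$, the residual moments converge, $\hat\mu_n\overset{a.s.}{\to}\mu$, by arguments as in the proof of Theorem \ref{thm:7.2}.

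Next I would record two structural facts about $\tau$. By Proposition \ref{prop:7.1}, $\tau(\theta,\mu)=\big\|\sum_{k=0}^m B_{k,m}(\theta)\mu_{2k}\big\|_S$ is the spectral norm of a matrix that is continuous in $\theta$ and affine in the moment vector, so $\tau$ is jointly continuous and $\tau(\theta,\hat\mu_n)\to\tau(\theta,\mu)$ uniformly on compacts. Moreover, on the GARCH parameter space $A(\theta,\eta)$ in \eqref{eq:689954} is entrywise nonnegative, hence so is $\EE[A(\theta,\eta_t)^{\otimes m}]$, and the spectral norm is monotone under the entrywise order on nonnegative matrices: if $0\le M\le N$ then $\|M\|_S\le\|N\|_S$, because the maximiser in $\|M\|_S=\max_{\|x\|=1}\|Mx\|$ may be taken nonnegative (as $|Mx|\le M|x|$ entrywise gives $\|Mx\|\le\|M|x|\|$). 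Consequently $\theta\mapsto\tau(\theta,\mu)$ decreases when the coefficients $(\alpha_{i},\beta_{j})$ are scaled down.

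The core step is then the sandwich. For the upper bound, $\hat\theta_n^c\in\Theta$ and uniform convergence give $\tilde L_n(\hat\theta_n^c)\le L(\hat\theta_n^c)+o(1)\le L(\theta_0)+o(1)$. For the lower bound I would produce a feasible approximating sequence $\theta_n\in\Theta_n^c$ with $\theta_n\to\theta_0$; since $\hat\theta_n^c$ maximises $\tilde L_n$ over $\Theta_n^c$, this yields $\tilde L_n(\hat\theta_n^c)\ge\tilde L_n(\theta_n)=L(\theta_n)+o(1)\to L(\theta_0)$ by continuity of $L$. The two bounds force $\tilde L_n(\hat\theta_n^c)\to L(\theta_0)$, hence $L(\hat\theta_n^c)\to L(\theta_0)$, and because $\Theta$ is compact and $\theta_0$ is the unique maximiser of the continuous $L$, every subsequential limit of $\hat\theta_n^c$ must equal $\theta_0$, i.e.\ $\hat\theta_n^c\overset{a.s.}{\to}\theta_0$.

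The construction of the feasible sequence is where I expect the real work to lie, and it separates into two cases. When the constraint is slack, $\tau(\theta_0,\mu)<1$, continuity and $\hat\mu_n\to\mu$ give $\tau(\theta_0,\hat\mu_n)<1$ for all large $n$ almost surely, so $\theta_0\in\Theta_n^c$ eventually and one simply takes $\theta_n=\theta_0$. The hard part is the binding boundary $\tau(\theta_0,\mu)=1$: here $\tau(\theta_0,\hat\mu_n)$ can exceed $1$ infinitely often, so $\theta_0\notin\Theta_n^c$ and a genuinely random correction is needed. I would use the monotonicity above: scaling the coefficients $(\alpha_{0i},\beta_{0j})$ of $\theta_0$ by $\lambda\in(0,1)$ gives points $\theta_0^{(\lambda)}\in\Theta$ with $\tau(\theta_0^{(\lambda)},\mu)<1$ and $\theta_0^{(\lambda)}\to\theta_0$ as $\lambda\uparrow1$; each such strictly feasible point satisfies $\tau(\theta_0^{(\lambda)},\hat\mu_n)<1$ for all large $n$, so a diagonal choice $\lambda_n\uparrow1$ (slowly enough) delivers $\theta_n=\theta_0^{(\lambda_n)}\in\Theta_n^c$ with $\theta_n\to\theta_0$. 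The only ingredient beyond continuity that this requires is that the scaled-down vectors remain in $\Theta$ (automatic for the GARCH parameter space), equivalently the mild constraint-qualification that $\theta_0$ lie in the closure of $\{\theta\in\Theta:\tau(\theta,\mu)<1\}$.
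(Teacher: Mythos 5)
Your route is genuinely different from the paper's. The paper proves Lemma \ref{lem:7.1} in three lines by parametrizing the constrained estimator by the plugged-in moment vector: it sets $\hat{\theta}_n(z)=\arg\max_{\theta\in\Theta_z}\frac{1}{n}\sum_{t=1}^n\tilde{\ell}_t(\theta)$ with $\Theta_z=\{\theta\in\Theta:\tau(\theta,z)\leq 1\}$, applies \citeauthor{francq2015risk} (\citeyear{francq2015risk}, Theorem 1) on the fixed compact set $\Theta_z$ to get $\hat{\theta}_n(z)\overset{a.s.}{\to}\theta_0(z)$, and then passes to $z=\hat{\mu}_n\overset{a.s.}{\to}\mu$ by continuity of $\tau$. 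Your Wald-type sandwich over the random set $\Theta_n^c$, with an explicitly constructed feasible sequence $\theta_n\to\theta_0$, confronts head-on the binding case $\tau(\theta_0,\mu)=1$ that the paper dispatches tersely, and your case split (slack versus binding) is in that respect more careful than the published argument. The gloss that $\sup_{\theta\in\Theta}|L_n(\theta)-L(\theta)|\to 0$ cannot hold literally where $L(\theta)=-\infty$ is harmless, since the limsup/open-cover version of the Wald argument you cite delivers exactly the two bounds your sandwich uses.

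There is, however, a genuine gap at the step you yourself flag as the core. Your monotonicity lemma gives only the weak inequality: $0\leq M\leq N$ entrywise implies $\|M\|_S\leq\|N\|_S$, hence $\tau(\theta_0^{(\lambda)},\mu)\leq\tau(\theta_0,\mu)=1$; the \emph{strict} inequality $\tau(\theta_0^{(\lambda)},\mu)<1$, which is precisely what the diagonal choice $\lambda_n\uparrow 1$ requires, does not follow and is not automatic. The reason is that $A(\theta,\eta)$ in \eqref{eq:689954} is not homogeneous in $(\alpha,\beta)$: the identity blocks are untouched by your scaling, so for $\max(p,q)\geq 2$ the matrix $\EE[A(\theta_0^{(\lambda)},\eta_t)^{\otimes m}]$ retains entries identically equal to $1$. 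Worse, under the paper's literal definition $\|M\|_S=\sqrt{\lambda_{\max}(M'M)}$ one has $\|M\|_S\geq\max_{i,j}|M_{ij}|$, so those fixed unit entries force $\tau(\theta,\mu)\geq 1$ for \emph{every} $\theta$ whenever $\max(p,q)\geq 2$, and no strictly feasible point exists at all: your construction can only work under the spectral-radius reading of $T$ (which is what \eqref{eq:767447} and the Ling/Ling--McAleer condition actually compute, and what the paper evidently intends despite its notation). Under that reading the strict decrease is true, but it needs an argument you do not supply, and strict Perron--Frobenius monotonicity via irreducibility is unavailable because Lemma \ref{lem:7.1} does not assume the interiority condition (Assumption \ref{as:7.6}), so some coefficients may vanish. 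A repair that avoids irreducibility: in the directed graph of the companion matrix, every closed walk must traverse the coefficient rows at least once in every $\max(p,q)$ consecutive steps, so each closed walk of length $k$ acquires at least $\lfloor k/\max(p,q)\rfloor$ factors of $\lambda$, whence $\rho\big(\EE[A(\theta_0^{(\lambda)},\eta_t)^{\otimes m}]\big)\leq\lambda^{1/\max(p,q)}\,\rho\big(\EE[A(\theta_0,\eta_t)^{\otimes m}]\big)<1$. With this lemma inserted (and your stated constraint-qualification that the scaled points remain in $\Theta$, an assumption the lemma as published does not make explicit either), your proof goes through.
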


\begin{proof}
Set $\hat{\theta}_n(z)=\arg\max_{\theta \in \Theta_z} \frac{1}{n}\sum_{t=1}^n \tilde{\ell}_t(\theta)$ with $\Theta_z = \big\{\theta \in \Theta : \tau(\theta,z)\leq 1\big\}$ and define $\theta_0(z) = \arg\max_{\theta \in \Theta_z}\EE[\ell_t(\theta)]$. Given $z$, $\Theta_z$ is a compact set and it follows by \citeauthor{francq2015risk} (\citeyear{francq2015risk}, Theorem 1) that $\hat{\theta}_n(z)\overset{a.s.}{\to}\theta_0(z)$. Because $\hat{\mu}_n\overset{a.s.}{\to}\mu$ (\citeauthor{beutner2018residual}, \citeyear{beutner2018residual}, Lemma 2), $\tau$ is continuous in both arguments and $\Theta = \Theta_\mu$ as $\tau(\theta_0,\mu)=1$, it follows that $\hat{\theta}_n^c = \hat{\theta}_n\big(\hat{\mu}_n\big)\overset{a.s.}{\to}\theta_0$.
\end{proof}

%\begin{proof}
\noindent
\textit{Proof of Corollary \ref{cor:7.2}.}
We define $M=M(\theta_0,\mu)=C(\theta_0,\mu)'C(\theta_0,\mu)$ with  $C(\theta_0,\mu)  = \sum_{k=0}^m B_{k,m}(\theta_0) \mu_{2k}$ and similarly set $\hat{M}_n = M(\hat{\theta}_n,\hat{\mu}_n)$, $\hat{M}_n^c = M(\hat{\theta}_n^c,\hat{\mu}_n)$ and $\hat{M}_n^\star = M(\hat{\theta}_n^\star,\hat{\mu}_n^\star)$. It is plain to see that all entries of $B_{0,1}(\cdot)$ and $B_{1,1}(\cdot)$ are differentiable at $\theta \in \mathring{\Theta}$, which carries over to $B_{k,m}(\cdot)$ for $k=0,1,\dots,m$ by the recursive structure. It follows that all entries of $C(\cdot,\cdot)$ and thus $M(\cdot,\cdot)$ are differentiable in both arguments. Therefore the gradient of $vech\big(M(\cdot,\cdot)\big)$ exists. Evaluating the gradient at $(\theta_0',\mu')'$, it is evident that all entries are non-negative and some are strictly positive since some entries are (sums and products of) polynomials of $\alpha_{0_i}$, $i=1,\dots,q$, $\beta_{0_j}$, $j=1,\dots,p$ and $\mu_{2k}$, $k=1,\dots,m$, which satisfy $\alpha_{0_i}>0$ and $\beta_{0_j}>0$ (see Assumption \ref{as:7.6}). The delta-method in conjunction with Theorem \ref{thm:7.2} implies that $\sqrt{n}\big(vech(\hat{M}_n)-vech(M)\big)$ converges in distribution to a Gaussian vector and hence $\sqrt{n}\big(\hat{M}_n-M\big) \overset{d}{\to} G$, where $G$ is a symmetric matrix with Gaussian entries. Holding ``the key to Kato'', we follow \citeauthor{watson1983statistics} (\citeyear{watson1983statistics}, Appendix B) leading to $\sqrt{n}(\hat{T}_n^2-T^2) \overset{d}{\to}\frac{tr\{G P_{\max}\}}{tr\{P_{\max}\}}$, where $P_{\max}$ is the projector associated with the largest eigenvalue of $M$ (see also \citeauthor{kato2013perturbation}, \citeyear{kato2013perturbation}). Note that the limiting random variable has a normal distribution. Applying the delta method once more, we obtain $\sqrt{n}(\hat{T}_n-T) \overset{d}{\to}\frac{1}{2T}\frac{tr\{G P_{\max}\}}{tr\{P_{\max}\}}$.

Consider the bootstrap and suppose that $\bar{H}_0:T \leq 1$ holds true. Repeating the previous argument in the bootstrap case while noting that $\lambda_{\max}(\hat{M}_n^c)=\hat{T}_n^{c\:2}\overset{a.s.}{\to}T^2$ and that the associated projector $\hat{P}_{n,\max}^c$ satisfies $\hat{P}_{n,\max}^c\overset{a.s.}{\to}P_{\max}$ as $\hat{M}_n^c=M(\hat{\theta}_n^c,\hat{\mu}_n) \overset{a.s.}{\to}M(\theta_0,\mu_0)=M$ (see \citeauthor{watson1983statistics}, \citeyear{watson1983statistics}), we establish $\sqrt{n}(\hat{T}_n^\star-\hat{T}_n^c)\overset{d^\star}{\to}\frac{1}{2T}\frac{tr\{G P_{\max}\}}{tr\{P_{\max}\}}$ in probability. Applying P\'olya's lemma (cf.\ \citeauthor{roussas1997course}, \citeyear{roussas1997course}, p.\ 206) validates \eqref{eq:5850320}.

We only highlight the changes when deriving the limiting distribution of the bootstrap quantity under the alternative $\bar{H}_1$. When $T=\tau(\theta_0,\mu)>1$, then $\hat{\theta}_n^c$ converges to a pseudo-true value, say $\theta^c$, such that $\hat{M}_n^{c}$ converges to $M^c=M(\theta^c,\mu)$. In that case $\sqrt{n}\big(\hat{M}_n^\star-\hat{M}_n^c\big) \overset{d^\star}{\to} G^c$ in probability, where $G^c$ is again a symmetric matrix with Gaussian entries. Furthermore,  $\lambda_{\max}(\hat{M}_n^c)=\hat{T}_n^{c\:2}$ converges to $T^c = \tau(\theta^c,\mu)$ whereas $\hat{P}_{n,\max}^c$ approaches $P_{\max}^c$, the projector associated with the largest eigenvalue of $M^c$. It follows that $\sqrt{n}(\hat{T}_n^\star-\hat{T}_n^c) \overset{d^\star}{\to}\frac{1}{2T^c}\frac{tr\{G^c P_{\max}^c\}}{tr\{P_{\max}^c\}}$ in probability, which establishes $\sqrt{n}(\hat{T}_n^\star-\hat{T}_n^c)=O_{p^\star}(1)$ in probability.
\qed
%\end{proof}
%Note that $T^2 = \lambda_{\max}(M)=1$ in this case  and the limiting random variable has a normal distribution. Applying the delta method once more, we obtain $\sqrt{n}(\hat{T}_n-1) \overset{d}{\to}\frac{1}{2}\frac{tr\{G P_{\max}\}}{tr\{P_{\max}\}}$. Repeating the argument in the bootstrap case while noting that $\lambda_{\max}(\hat{M}_n^c)=\hat{T}_n^{c\:2}=1$ and that the associated projector $\hat{P}_{n,\max}$ satisfies $\hat{P}_{n,\max}\overset{a.s.}{\to}P_{\max}$ as $\hat{M}_n^c \overset{a.s.}{\to}M$ (see again \citeauthor{watson1983statistics}, \citeyear{watson1983statistics}), we establish $\sqrt{n}(\hat{T}_n^\star-1)\overset{d^\star}{\to}\frac{1}{2}\frac{tr\{G P_{\max}\}}{tr\{P_{\max}\}}$ in probability.

\section*{Acknowledgements}
The paper was drafted during the author's research visit at CREST, Paris. The author thanks  Jean-Michel Zako\"ian, Christian Francq, Cees Diks, Eric Beutner, Stephan Smeekes, Sean Telg and Hanno Reuvers for useful comments and suggestions. This research was financially supported by the Netherlands Organisation for Scientific Research (NWO).

\singlespacing
\bibliographystyle{Chicago-modified}
\bibliography{bibliography}

\end{document}